\DeclareFontFamily{OT1}{pzc}{}
\DeclareFontShape{OT1}{pzc}{m}{it}{<-> s * [1.10] pzcmi7t}{}
\DeclareMathAlphabet{\mathpzc}{OT1}{pzc}{m}{it}
\newtheorem{theorem}{Theorem}[section]
\newtheorem{lemma}[theorem]{Lemma}
\newtheorem{definition}[theorem]{Definition}
\newtheorem{remark}[theorem]{Remark}
\providecommand{\N}{\mathbb{N}}
\providecommand{\R}{\mathbb{R}}
\providecommand{\SO}{\mathbf{SO}}
\providecommand{\SE}{\mathbf{SE}}
\providecommand{\grpG}{\mathbf{G}}
\providecommand{\grpK}{\mathbf{K}}
\providecommand{\grpR}{\mathbf{R}}
\providecommand{\gothg}{\mathfrak{g}}
\providecommand{\gothL}{\mathfrak{L}}
\providecommand{\gothX}{\mathfrak{X}} 
\providecommand{\so}{\mathfrak{so}}
\providecommand{\se}{\mathfrak{se}}
\providecommand{\Sph}{\mathrm{S}}
\providecommand{\calG}{\mathcal{G}}
\providecommand{\calL}{\mathcal{L}}
\providecommand{\calM}{\mathcal{M}}
\providecommand{\calN}{\mathcal{N}}
\providecommand{\vecL}{\mathbb{L}}
\providecommand{\cset}[2]{\left\{ {#1} \;\middle|\; {#2} \right\}}
\providecommand{\tT}{\mathrm{T}} 
\providecommand{\eb}{\mathbf{e}} 
\providecommand{\Id}{I} 
\providecommand{\tL}{\mathrm{L}} 
\providecommand{\tR}{\mathrm{R}} 
\DeclareMathOperator{\spn}{span}
\DeclareMathOperator{\ad}{ad}
\DeclareMathOperator{\image}{im}
\providecommand{\id}{\mathrm{id}} 
\providecommand{\Lyap}{\mathcal{L}} 
\DeclareSymbolFont{stixletters}{LS1}{stix}{m}{it}
\DeclareMathAccent{\cev}{\mathord}{stixletters}{"91}
\DeclareMathAccent{\vec}{\mathord}{stixletters}{"92}
\DeclareMathAccent{\vecev}{\mathord}{stixletters}{"95}
\providecommand{\td}{\mathrm{d}}
\providecommand{\tD}{\mathrm{D}}
\providecommand{\ddt}{\frac{\td}{\td t}}
\providecommand{\ddtau}{\frac{\td}{\td \tau}}
\providecommand{\mr}[1]{\mathring{#1}} 
\providecommand{\scirc}{%
    \hbox{\fontfamily{\rmdefault}\fontsize{0.4\dimexpr(\f@size pt)}{0}\selectfont{\raisebox{-0.52ex}[0ex][-0.52ex]{$\circ$}}}}
\providecommand{\ucirc}{%
    \hbox{\fontfamily{\rmdefault}\fontsize{0.4\dimexpr(\f@size pt)}{0}\selectfont{\raisebox{0.0ex}[0ex][-0.52ex]{$\circ$}}}}
\mathchardef\mhyphen="2D
\providecommand{\idx}[5][]{
\ifthenelse{\isempty{#1}}
{\tensor*[_{#4}^{#3}]{#2}{_{#5}}}
{\tensor*[_{#4}^{#3}]{#2}{^{#1}_{#5}}}
}
\newcommand{\dds}{\frac{\td}{\td s}}
\newcommand{\at}[2]{\left. #1 \right| _{#2}}
\newcommand{\ddso}{\at{\dds}{s=0}}
\newcommand{\Lie}{\mathrm{Lie}}
\newcommand{\Flow}{\mathfrak{F}}
\newcommand{\Diff}{\mathfrak{D}}
\newcommand{\set}[2]{\left\{ #1 \;\middle|\; #2 \right\}}
\begin{document}

\title{Synchronous Models and Fundamental Systems in Observer Design}
\headertitle{Synchronous Models and Fundamental Systems in Observer Design}

\author{
\href{https://orcid.org/0000-0003-4391-7014}{\includegraphics[scale=0.06]{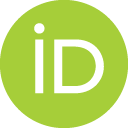}\hspace{1mm}
Pieter van Goor}
\\
    Robotics and Mechatronics group \\
    University of Twente \\
    7500 AE Enschede, The Netherlands \\
    \texttt{Pieter.vanGoor@anu.edu.au} \\
	\And	\href{https://orcid.org/0000-0002-7803-2868}{\includegraphics[scale=0.06]{orcid.png}\hspace{1mm}
    Robert Mahony}
\\
    Systems Theory and Robotics Group \\
	Australian National University \\
    ACT, 2601, Australia \\
	\texttt{Robert.Mahony@anu.edu.au} \\
}

\maketitle

\vspace{1cm}



\begin{abstract}
This paper introduces the concept of a \emph{synchronous model} as an extension of the internal model concept used in observer design for dynamical systems.
A system is said to contain a synchronous model of another if there is a suitable error function between the two systems that remains stationary for all of the trajectories of the two systems. 
A system is said to admit a synchronous lift if a second system containing a synchronous model exists.
We provide necessary and sufficient conditions that a system admits a synchronous lift and provide a method to construct a (there may be many) lifted system should one exist. 
We characterise the class of all systems that admit a synchronous lift by showing that they consist of fundamental vector fields induced by a Lie group action, a class of system we term \emph{fundamental systems}. 
For fundamental systems we propose a simple synchronous observer design methodology, for which we show how correction terms can be discretised and combined easily, facilitating global characterisation of convergence and performance.
Finally, we provide three examples to demonstrate the key concepts of synchrony, symmetry construction, and observer design for a fundamental system.
\end{abstract}


\section{Introduction}

The importance of Lie groups in estimation of nonlinear systems has been studied since at least the 1970s \cite{1972_brockett_SystemTheoryGroup,1985_grizzle_StructureNonlinearControl,1990_cheng_ObservabilitySystemsLie,1991_salcudean_GloballyConvergentAngular}.
Equivariance- and invariance-based observers and filters have seen increasing impact in the state estimation literature over the past decades, particularly for applications in robotics, navigation, and aerospace 
\cite{1985_grizzle_StructureNonlinearControl,2007_bonnabel_ObservateursAsymptotiquesInvariants,2018_barrau_InvariantKalmanFiltering,2022_mahony_ObserverDesignNonlinear,2023_vangoor_EquivariantFilterEqF}.
In the 2000s, the advent of aerial robotics renewed interest in strategies for attitude estimation on the special orthogonal group $\SO(3)$ \cite{2003_markley_AttitudeErrorRepresentations,2007_bonnabel_LeftinvariantExtendedKalman,2008_mahony_NonlinearComplementaryFilters}, as well as estimation of only a single direction on the sphere $\Sph^2$ treated as a homogeneous space of $\SO(3)$ \cite{2006_hamel_AttitudeEstimation3,2012_grip_AttitudeEstimationUsing,2012_trumpf_AnalysisNonLinearAttitude}.
A common feature of those observer designs that exhibited (almost-)global stability was that the uncorrected error dynamics remained stationary, allowing Lyapunov analysis to be undertaken without the need to compensate or dominate exogenous terms introduced by the input signal.
This property was formally defined as \emph{synchrony} in \cite{2010_lageman_GradientLikeObserversInvariant}, based on earlier similar notions such as in \cite{1997_nijmeijer_ObserverLooksSynchronization}.
Synchrony was shown to facilitate powerful observer designs for left- and right-invariant systems on Lie groups in \cite{2010_lageman_GradientLikeObserversInvariant,2013_khosravian_BiasEstimationInvariant,2015_khosravian_ObserversInvariantSystems}.
This concept of synchrony is not the same as synchronisation in coupled and multi-agent systems \cite{1997_nijmeijer_ObserverLooksSynchronization,1998_pogromsky_PassivityBasedDesign,2007_pham_StableConcurrentSynchronization,2007_stan_AnalysisInterconnectedOscillators}, that refers to the convergence of individual system trajectories to a common solution of the individual dynamics. 
Rather, the synchrony we use is the property that there is an error or measure between two systems that is constant along all trajectories of the systems when driven by the same input signal.  
The power of synchrony in observer design is that if the observer and true systems are synchronous in this sense, then the associated error dynamics depend only on the correction term added to the observer. 
This removes the need for the correction terms to compensate or dominate system dynamics, and is essential to many of the observer designs that feature (almost-)global (rather than local or semi-global) asymptotic stability of the error dynamics.

\emph{Group-affine systems} on Lie groups were first described in the observer design literature in \cite{2014_barrau_InvariantParticleFiltering}, although they had been studied earlier in the context of controllability and reachability under the name of \emph{Group-linear systems} \cite{1999_ayala_LinearControlSystems,2001_ayala_ControllabilityPropertiesClass,2010_jouan_EquivalenceControlSystems,2013_ayala_NullControllabilityLie}.
The importance of group-affine systems in estimation is due largely to their log-linearity property, which has been exploited in linearisation-based stochastic observers, such as the invariant extended Kalman filter (IEKF) \cite{2017_barrau_InvariantExtendedKalman,2018_barrau_InvariantKalmanFiltering,2019_brossard_ExploitingSymmetriesDesign,2022_brossard_AssociatingUncertaintyExtended} and the Equivariant Filter (EqF) \cite{2022_mahony_ObserverDesignNonlinear,2023_vangoor_EquivariantFilterEqF,2023_fornasier_EquivariantSymmetriesInertial}.
While the earlier observer designs for left-invariant systems based on \cite{2010_lageman_GradientLikeObserversInvariant} do not exhibit synchrony for group-affine systems, it was recently shown in \cite{2021_vangoor_AutonomousErrorConstructive} that synchrony could be recovered by designing the observer using a larger observer space that included the automorphism group of the original Lie group.
This design approach has since been exploited to yield the first almost-global observer designs for problems in inertial navigation \cite{2023_vangoor_ConstructiveEquivariantObserver,2025_vangoor_SynchronousObserverDesign}.
A number of studies have sought to characterise systems that admit invariant or group-affine dynamics \cite{2010_lageman_GradientLikeObserversInvariant,2010_jouan_EquivalenceControlSystems,2021_vangoor_AutonomousErrorConstructive,2024_liu_ExistenceLinearObserved}.
However, these studies do not directly consider the property of synchrony and as a consequence do not fully resolve the underlying system structure. 
To the authors' knowledge, the present work provides the first intrinsic characterisation of synchrony and its deep ties to Lie group symmetries leading to a new class of \emph{fundamental systems}. 

In this paper we formally introduce the concept of a system containing a \emph{synchronous model} of another system.
This concept is an extension of the idea of a system containing an internal model of another system and builds on earlier notions of synchrony \cite{1990_nijmeijer_NonlinearDynamicalControl,2010_lageman_GradientLikeObserversInvariant}.
Systems containing a synchronous model of a target system are ideal candidates, indeed a better choice than the target system dynamics themselves, for the observer dynamics of an observer architecture.
However, existence of a synchronous model for system is more demanding than existence of an internal model, and in fact, we show that the existence of a synchronous model is only possible if the original system has specific differential geometric structure.
Specifically, we show that a synchronous model can be constructed if and only if the accessibility Lie algebra of the original system is finite-dimensional and complete.
This class of systems is closely related to \emph{fundamental systems}, which are those induced by the fundamental vector fields associated with a group action on a homogeneous space, and include the well-studied left-, right-, and dual-, invariant systems on Lie groups.
The main result of the paper (Theorem~\ref{thm:synchrony_fundamental}) states that a controllable system admits a synchronous lift if and only if it is fundamental. 
Finally, we propose a hybrid observer design for a fundamental system that exploits synchrony in developing and combining correction terms.

The structure of the paper is as follows, aside from the introduction and conclusion.
Section \ref{sec:preliminaries} introduces the prerequisite differential geometry, Lie theory, and systems theory.
Section \ref{sec:synchronous_model} formally defines the notions of an error function and a synchronous model as an extension of the concept of internal model.
Section \ref{sec:symmetry_construction} provides the key technical results that characterise how the existence of a synchronous model for a system is intrinsically linked to that system's accessibility Lie algebra.
Section \ref{sec:fundamental_systems} defines fundamental systems and shows that a controllable system admits a synchronous model if and only if it is fundamental.
Section \ref{sec:observer_Design} contains two theorems that support synchronous observer designs for fundamental systems, showing how one can easily discretise and combine correction terms by exploiting synchrony.
Section \ref{sec:observer_Design} details three examples that demonstrate a fundamental system, the construction of a synchronous model, and the application of the discretisation and combination theorems.

\section{Preliminaries}
\label{sec:preliminaries}

For a comprehensive introduction to smooth manifolds and Lie groups, we refer the reader to \cite{2012_lee_IntroductionSmoothManifolds}.

\subsection{Smooth Manifolds}
Let $\calM$ be a smooth $m$ dimensional manifold.
The tangent space of $\calM$ at a point $\xi \in \calM$ is denoted $\tT_\xi \calM$, and the tangent bundle is denoted $\tT \calM$.
For a smooth map $h : \calM \to \calN$, the differential of $h$ at a point $\xi \in \calM$ is a linear map denoted
\begin{align*}
    \tD h(\xi) : \tT_\xi \calM \to \tT_{h(\xi)}\calN.
\end{align*}
To disambiguate the manifold and tangent space arguments we write the vector argument $v \in \tT_\xi \calM$ in square brackets $\tD h(\xi)[v] \in \tT_{h(\xi)}\calN$.
The differential of $h$, as a bundle homomorphism of tangent bundles, is denoted
\begin{align*}
    \tD h : \tT \calM \to \tT \calN,
\end{align*}
and is used when the base point of the argument is clear from context.
For a smooth map with multiple arguments, $h : \calM_1 \times \calM_2 \to \calN$, the differential with respect to a particular argument $\xi_2$ at a particular value $\zeta_2$ is denoted
\begin{align*}
    \tD_{\xi_2 | \zeta_2} h(\xi_1,\xi_2) : \tT_{\zeta_2} \calM_2 \to \tT_{h(\xi_1, \zeta_2)}\calN,
\end{align*}
and the other arguments are taken to be fixed.

A diffeomorphism $\varphi : \calM \to \calN$ is a smooth map between smooth manifolds $\calM$ and $\calN$ with a smooth inverse.
The space of diffeomorphisms from $\calM$ to itself is labeled $\Diff(\calM)$.
The space of smooth vector fields on $\calM$ is denoted $\gothX(\calM)$.
For a given vector field $f \in \gothX(\calM)$, $\Flow_f^t$ is the (maximal) flow of $f$ for a time $t$, and satisfies
\begin{align}
    \ddt \Flow_f^t(\xi) &= f(\Flow_f^t(\xi)), &
    \Flow_f^0(\xi) &= \xi,
    \label{eq:vector_field_flow}
\end{align}
for all $\xi \in \calM$.
If the flow $\Flow_f^t$ exists for all time $t \in \R$ and all $\xi \in \calM$, then the vector field $f$ is said to be complete.
In this case, $\Flow_f^t: \calM \to \calM$ is a diffeomorphism, $\Flow_f^t \in \Diff(\calM)$.

The vector fields $\gothX(\calM)$ form a Lie algebra under the classical Lie bracket $[f,g] \in \gothX(\calM)$ defined for any $f, g \in \gothX(\calM)$ by
\begin{align}\label{eq:lie-bracket-dfn}
    [f,g](\xi) :=
\at{\ddt}{t=0} \Flow_{g}^{-\sqrt{t}}\circ\Flow_{f}^{-\sqrt{t}}\circ\Flow_{g}^{\sqrt{t}}\circ\Flow_{f}^{\sqrt{t}} (\xi).
\end{align}
We also use the notation $\ad_{f}g = [f,g]$ particularly where a differential acts on a Lie bracket of vector fields (e.g.~Lemma~\ref{lem:synchronous_vector_fields}).
In any given set of local coordinates the Lie bracket can also be computed by
\begin{align*}
    [f,g](\xi) = \tD g(\xi) [f(\xi)] - \tD f(\xi) [g(\xi)].
\end{align*}

\subsection{Systems}
Let $\vecL$ be a finite dimensional (real) vector space and let $f : \vecL \to \gothX(\calM)$, $f(v) := f_v \in \gothX(\calM)$, be a map that we term the \emph{system map}.
The associated system is the set of trajectories $(\xi(t; \xi_0), v(t)) \in \calM \times \vecL$, for $t \in [0, \infty)$, such that
\begin{align}
\dot{\xi} = f_{v(t)} (\xi(t)),  \qquad\qquad \xi_0 \in \calM.
\label{eq:system}
\end{align}
In practice, the trajectories are restricted to solutions associated with input signals $v(t) \in \vecL$ that satisfy real-world constraints, typically piecewise uniformly Lipschitz continuous in time.
The exact assumptions here are not important as long as the signal class considered ensures local existence and uniqueness of solutions $\xi(t)$.
Global existence of solutions is related to completeness of the vector fields $f_v$ and is separately discussed in the results presented later in the paper.
We will assume that $f : \vecL \to \gothX(\calM)$ is affine.
That is,
\begin{align}
f_v = f_0 + \sum_i^\ell v_i f_i,
\label{eq:affine_system}
\end{align}
where $\{\eb_1, \ldots, \eb_\ell \}$ is a basis for $\vecL$ and $v = \sum_i^\ell v_i \eb_i$.
Here $f_0 \in \gothX(\calM)$ is termed the \emph{drift} vector field and the $f_i := f_{\eb_i} - f_0 \in \gothX(\calM)$ are termed the \emph{input} vector fields.

Given a system $f : \vecL \to \gothX(\calM)$, the image of the system map $\image(f) \subset \gothX(\calM)$ is an affine subspace of $\gothX(\calM)$.
The Lie algebra generated by the system $\Lie(\image (f))$, also called the \emph{accessibility algebra}, is defined as the smallest Lie sub-algebra of $\gothX(\calM)$ that contains $\image (f)$ \cite[Def.~3.7]{1990_nijmeijer_NonlinearDynamicalControl}.
A useful characterisation is given by \cite[Prop.~3.8]{1990_nijmeijer_NonlinearDynamicalControl};
\begin{align}
\Lie(\image (f)) &=  \mathrm{span}
 \cset{ [f_{{k_n}}, [\cdots,[f_{k_2},f_{k_1}] \cdot\cdot] \in \gothX(\calM)}
{n \in \N, \; k_1,...,k_n \in \{0, \ldots, \ell\} }.
\label{eq:system_lie_algebra}
\end{align}
Note that $\Lie(\image(f))$ may be infinite dimensional, although it is always countable by the construction \eqref{eq:system_lie_algebra}.
The accessibility distribution is the smooth assignment of subspaces 
\begin{align}
D_{\Lie(\image (f))}(\xi) = \spn \cset{ g(\xi) \in \tT_\xi \calM }{g \in \Lie(\image (f))}
\label{eq:accessibility_distribution}
\end{align} 
of the tangent bundle $\tT \calM$. 
A system is said to be \emph{controllable} if its accessibility distribution $D_{\Lie(\image (f))}(\xi) = \tT_\xi \calM$ spans the tangent space of $\calM$ at each point.

\subsection{Lie groups and algebras}

A Lie group $\grpG$ is a smooth manifold with smooth product, identity, and smooth inverse.
The product of elements $X,Y \in \grpG$ is denoted $X Y = \tL_X Y = \tR_Y X \in \grpG$ where $\tL$ and $\tR$ are the left and right translations.
The identity element is denoted $I\in \grpG$, and the inverse of an element $X \in \grpG$ is written $X^{-1} \in \grpG$.

The Lie algebra $\gothg$ of $\grpG$ is a vector space equipped with a Lie bracket $[\cdot, \cdot]: \gothg \times \gothg \to \gothg$, and may be identified with the tangent space of $\grpG$ at the identity $I$.
For any tangent vector $V \in \tT_{X} \grpG$, we denote the left- and right-translation of $V$ by $Y$ as $Y V = \tD \tL_Y [V] \in \tT_{Y X} \grpG$ and as $V Y = \tD \tR_Y [V]  \in \tT_{X Y} \grpG$, respectively.
In particular, for an element $U \in \gothg$ we use the notations $XU, UX \in \tT_X \grpG$.
Define the vector fields $\tR^\sharp_U(X) := XU \in \gothX(\grpG)$ (see \eqref{eq:fundamental_vector_field} below), then the Lie bracket of vector fields on $\gothX(\grpG)$ corresponds with the Lie bracket of $\gothg$,
\[
[\tR^\sharp_U, \tR^\sharp_V] = \tR^\sharp_{[U,V]}.
\]
Each vector field $\tR^\sharp_U$ is left-invariant in the sense that $Z \tR^\sharp_U(X) = Z X U = \tR^\sharp_U(ZX)$, for all $X,Z \in \grpG$.
A left invariant system on a group $\grpG$ is a dynamical system
\begin{align}
    \ddt X(t) &= \tR^\sharp_{U(t)}(X(t)) = X(t) U(t),
\label{eq:XU}
\end{align}
for $U(t) \in \gothg$ an input signal.

A (right) group action $\phi : \grpG \times \calM \to \calM$ of a Lie group $\grpG$ acting on a manifold $\calM$ is a smooth map that satisfies
\begin{align*}
    \phi(Y, \phi(X, \xi)) = \phi(XY, \xi), && \phi(I,\xi) = \xi.
\end{align*}
An action is transitive if the partial map $\phi_\xi : \grpG \to \calM$ given by $\phi_X(\xi) := \phi(X,\xi)$ is a surjective submersion.
Every $U \in \gothg$ induces a \emph{fundamental vector field}\footnote{
In the geometric mechanics literature, fundamental vector fields are often written simply $u^\sharp$ where the Lie group and group action are implicit in the problem formulation.
We introduce the notation $\phi^\sharp_u$ to make the group action explicit. 
}
$\phi^\sharp_U \in \gothX(\calM)$ defined by
\begin{align}
\phi^\sharp_U(\xi) := \tD \phi_\xi(I)[U].
\label{eq:fundamental_vector_field}
\end{align}
Since the map $\phi$ induces a smooth group homomorphism\footnote{If we consider $\phi$ as a left (rather than right) action, then $\phi : \grpG \to \Diff(\calM)$ is a Lie group anti-homomorphism.} 
$\phi : \grpG \to \Diff(\calM)$, $X \mapsto \phi_X$, from a Lie group $\grpG$ to diffeomorphisms of $\calM$, it induces a Lie algebra homomorphism $\phi^\sharp : \gothg \to \gothX(\calM)$.
Indeed, left-invariant vector fields on a Lie group are exactly the fundamental vector fields $\tR^\sharp_{U}(X) = \tD_{Z|\Id} \tR(Z,X)[U] = X U$. 

The exponential map $\exp : \gothg \to \grpG$ is defined by $\exp(t U) := \Flow_{\tR^\sharp_U}^t(I)$ for each $U \in \gothg$ and $t \in \R$.
For any group action $\phi$, the exponential also provides the flow of $\phi^\sharp_U$ by $\phi_{\exp(tU)}(\xi) = \Flow_{\phi^\sharp_U}^t(\xi)$.

\section{Synchronous Models}
\label{sec:synchronous_model}

In this section we recall the concept of an internal model before developing the notion of an error function and a synchronous model.
We emphasise that these concepts are asymmetric, in the sense that a system containing an internal or a synchronous model of another does not imply the converse.
An internal model can be defined directly in terms of system trajectories and a reconstruction map that converts the state space of one system into that of the other, while the definition of a synchronous model requires an error function that specifies how trajectories of the systems are to be compared.

Consider a system \eqref{eq:system} with system function $f : \vecL \to \gothX(\calM)$, where $\vecL$ is a finite-dimensional vector space and  $\calM$ is a smooth manifold.
Now consider a second system with system function $f^\dag : \vecL \to \gothX(\calG)$, where $\calG$ is another smooth manifold.
The system $f^\dag$ is said to \emph{contain an internal model} \cite{2010_lageman_GradientLikeObserversInvariant} of $f$ if there exists a surjective submersion $\varphi : \calG \to \calM$ (the reconstruction map) such that $\xi(t) = \varphi (\hat{X}(t))$ is a solution of $f$ for all solutions $\hat{X}(t)$ of $f^\dag$.
Since for any initial condition $\xi_0 \in \calM$ there is an initial condition $\hat{X}_0 \in \calG$ such that $\varphi(\hat{X}_0) = \xi_0$, then uniqueness of solutions implies that $f^\dag$ can reproduce all trajectories of $f$.
The property that $f^\dag$ contains an internal model is non-reflexive; that is, it says nothing about the ability of $f$ to reproduce trajectories of $f^\dag$.
Indeed, the requirement that the reconstruction map is surjective implies that $\dim \calG \geq \dim \calM$ and thus $f^\dag$ can have dynamics in the kernel of $\varphi$ that have nothing to do with the underlying system $f$.

Existence of an internal model says nothing about how the trajectories of $f$ and $f^\dag$ relate if the initial conditions do not correspond; $\varphi(\hat{X}_0) \not= \xi_0$.
A synchronous model is an extension of the concept of internal model that allows one to compare trajectories on $\calG$ and $\calM$ with different initial conditions. 
A synchronous model will depend on the notion of an \emph{error function}.

\begin{definition}[Error Function]\label{dfn:error_function}
    Let $\calM$ and $\calG$ be smooth manifolds.
    An \emph{error function} (of $\calM$ with respect to $\calG$) is a smooth map $e : \calG \times \calM \to \calM$, where:
    \begin{enumerate}
        \item\label{item:error_embedding} The partial maps $e_{\hat{X}} : \calM \to \calM$, $e_{\hat{X}}(\xi) := e(\hat{X}, \xi)$ are a family of diffeomorphisms of $\calM$.
        \item\label{item:error_reconstruction} For any fixed point $\mr{\xi} \in \calM$, the map $\phi_{\mr{\xi}} : \calG \to \calM$,  $\phi_{\mr{\xi}}(\hat{X}) := e_{\hat{X}}^{-1}(\mr{\xi})$ is a smooth surjective submersion.
    \end{enumerate}
\hfill$\Box$\end{definition}

This definition of error is, importantly, asymmetric, analogous to the definition of an internal model.
That is, while the error function is a diffeomorphism from $\calM$ to itself for every fixed $\hat{X} \in \calG$, it is not the case that the error must embed $\calG$ into $\calM$ for every (or even any) fixed $\xi \in \calM$.
This will allow flexibility in choosing $f^\dag$ dynamics while ensuring that all the $f$ dynamics are captured when the definition is used in observer design.
Given the error map $e$, then Condition \ref{item:error_reconstruction} means that any choice of fixed point $\mr{\xi} \in \calM$ induces a reconstruction map,
\begin{align}\label{eq:reconstruction_map}
    \varphi(\hat{X}) := \phi_{\mr{\xi}}(\hat{X}) = e_{\hat{X}}^{-1}(\mr{\xi}).
\end{align}
This construction motivates why the range of the error map is chosen to be $\calM$.

The notion of an error function allows one to define the concept that a system $f^\dag$ on $\calG$ contains a \emph{synchronous model} of  $f$ on $\calM$.

\begin{definition}[Synchronous Model]\label{dfn:synchronous_model}
Consider two systems \eqref{eq:system} defined by $f : \vecL \to \gothX(\calM)$ and $f^\dag : \vecL \to \gothX(\calG)$.
Let $e : \calG \times \calM \to \calM$ be an error function according to Definition~\ref{dfn:error_function}.
The system $f^\dag$ is said to \emph{contain a synchronous model} of $f$ (for the error $e$) if for all initial conditions $\hat{X}_0 \in \calG$ and $\xi_0 \in \calM$ then
\begin{align}
\ddt e (\hat{X}(t; \hat{X}_0),\xi(t;\xi_0)) = 0,
\label{eq:ddt_e_0}
\end{align}
where $\hat{X}(t; \hat{X}_0)$ and $\xi(t;\xi_0)$ are trajectories of the systems $f^\dag$ and $f$, respectively, driven by the same input signal $v(t) \in \vecL$.
\hfill$\Box$\end{definition}

Analogous to how the definition of internal model depends on the reconstruction map $\varphi$, the definition of synchronous model depends on the error function $e : \calG \times \calM \to \calM$ (Def.~\ref{dfn:error_function}). 
It is important to note that Definition \ref{dfn:synchronous_model} is a property of $f^\dag$ (with respect to $f$) rather than a property of $f$. 
This distinction is important in \S\ref{sec:symmetry_construction} when we discuss synchronous lifts (Def.~\ref{dfn:synchronous_lift}). 
If the error function is clear from context then we may omit the reference and state simply that a system $f^\dag$ contains a synchronous model of $f$.

It is straightforward to show that if $f^\dag$ contains a synchronous model of $f$ then it contains an internal model of $f$ for the reconstruction map $\varphi$ defined as in \eqref{eq:reconstruction_map}.
Indeed, fixing $\mr{\xi} \in \calM$, then for any $\hat{X}_0$ define $\xi_0 = \varphi(\hat{X}_0) = e_{\hat{X}_0}^{-1}(\mr{\xi})$.
Since $\ddt e(\hat{X}(t; \hat{X}_0),\xi(t; \xi_0)) \equiv 0$ along the solutions of $f^\dag$ and $f$, it follows that
\[
e_{\hat{X}(t; \hat{X}_0)}(\xi(t; \xi_0)) = e_{\hat{X}_0}(\xi_0) = \mr{\xi}
\]
is constant.
The result follows by inverting $e_{\hat{X}(t; \hat{X}_0)}$. 
That is $\xi(t; \xi_0) = e_{\hat{X}(t; \hat{X}_0)}^{-1}(\mr{\xi}) = \phi(\hat{X}(t; \hat{X}_0))$ \eqref{eq:reconstruction_map}. 
However, a synchronous model is far more than an internal model, and can be thought of (non-rigorously) as describing parallel (in the sense encoded by the error $e$) trajectories to the system trajectories.

\section{Existence of a Synchronous Lift}
\label{sec:symmetry_construction}

The existence of an internal or synchronous model is a property of a pair of systems, $f^\dag : \vecL \to \gothX(\calG)$ and $f : \vecL \to \gothX(\calM)$ (Def.~\ref{dfn:synchronous_model}). 
In this section, we ask the question, `given a system $f : \vecL \to \gothX(\calM)$, does there exist a system $f^\dag : \vecL \to \gothX(\calG)$ that contains a synchronous model?'.  
This is a question about a single system $f$. 
To answer it we introduce the concept of a synchronous lift and then go on to characterise systems $f$ that admit a synchronous lift. 
We begin with a classical definition of a lift. 

\begin{definition}
\label{dfn:lift}
Let  $f : \vecL \to \gothX(\calM)$ be a system on $\calM$. 
Then $f^\dag : \vecL \to \gothX(\calG)$ is a termed a \emph{lift} or \emph{lifted system} of $f$ if there exists a smooth surjective submersion $\varphi : \calG \to \calM$ (reconstruction map) such that 
\[
\tD \varphi f_v^\dag(X) = f_v(\varphi(X))
\]
for all $v \in \vecL$ and $X \in \calG$. 
\hfill$\Box$\end{definition}

Let $X(t)$ be a solution of $\dot{X} = f^\dag_{v(t)}(X(t))$ for some input signal $v(t) : \R \to  \vecL$. 
Since 
\[
\ddt \varphi(X(t)) = \tD \varphi f_v^\dag(X(t)) = f_v (\varphi(X(t))) 
\]
then $\xi(t) = \varphi(X(t))$ is a solution of $f$.
It follows that if $f$ admits a lifted system $f^\dag$ then $f^\dag$ contains an internal model of $f$.  
It is straightforward to verify that $f^\dag = f$ is a lifted system for itself with reconstruction map $\varphi = \id$ the identity map. 
Thus, every system admits a lift (the system itself) and this lift contains an internal model of the system (simply a copy of the system). 
This simple correspondence does not hold for synchronous models. 

\begin{definition}
\label{dfn:synchronous_lift}
Let $f : \vecL \to \gothX(\calM)$ be a system on $\calM$. 
The system $f$ is said to \emph{admit a synchronous lift} if there exists a manifold $\calG$, an error function $e : \calG \times \calM \to \calM$, and a system $f^\dag : \vecL \to \gothX(\calG)$ such that 
\begin{align}
\tD_{\hat{X}} e(\hat{X}, \xi) [f^\dag_v(\hat{X})] = - \tD_{\xi} e(\hat{X}, \xi) [f_v(\xi)],
\label{eq:synchronous_lift}
\end{align}
for all $v \in \vecL$, $\hat{X} \in \calG$ and $\xi \in \calM$.

We also refer to a synchronous lift $g^\dag$ of a vector field $g \in \Lie(\image (g))$ if it satisfies \eqref{eq:synchronous_lift} for all $\xi \in \calM$. 
\hfill$\Box$\end{definition}

Note that if $f^\dag$ is a synchronous lift then 
\begin{align}
\ddt  e (\hat{X}(t),\xi(t)) 
&=
\tD_{\hat{X}} e(\hat{X}, \xi) [f_v^\dag(\hat{X})] + \tD_{\xi} e(\hat{X}, \xi) [f_v(\xi)] = 0,
\label{eq:ddt_e_synch_condition}
\end{align}
and $f^\dag$ contains a synchronous model of $f$. 
However, the two definitions are not identical. 
Definition~\ref{dfn:synchronous_model} is a property of the system $f^\dag$ with respect to a given system $f$. 
In contrast, Definition~\ref{dfn:synchronous_lift} is a property of the system $f$, and may refer to a multitude of possible synchronous lifts $f^\dag$, any one of which contains a synchronous model for $f$.

Note that a synchronous lift $f^\dag$ is a lift in the classical sense (Definition \ref{dfn:lift}) with respect to the reconstruction function $\varphi$ \eqref{eq:reconstruction_map} since
\begin{align}
\tD_{\hat{X}} \varphi(\hat{X})  [f^\dag (\hat{X})] &=
 \tD_{\hat{X}} e_{\hat{X}}^{-1}(\mr{\xi}) [f^\dag(\hat{X})] \notag \\
    &= - (\tD_{\zeta | e_{\hat{X}}^{-1}(\mr{\xi})} e(\hat{X}, \zeta))^{-1} \tD_{X|\hat{X}} e(X, e_{\hat{X}}^{-1}(\mr{\xi})) [f^\dag(\hat{X})] \notag \\
    &= (\tD_{\zeta | e_{\hat{X}}^{-1}(\mr{\xi})} e(\hat{X}, \zeta))^{-1} \tD_{\zeta | e_{\hat{X}}^{-1}(\mr{\xi})} e(\hat{X}, \zeta) [f(e_{\hat{X}}^{-1}(\mr{\xi}))] \notag \\
    &= f(e_{\hat{X}}^{-1}(\mr{\xi})) = f(\varphi(\hat{X})). \label{eq:synchlift_2_lift}
\end{align}
Thus, the question of existence of a system containing a synchronous model is equivalent to asking; `given $f : \calL \to \gothX(\calM)$ does $f$ admit a synchronous lift?' (Def.~\ref{dfn:synchronous_lift}). 
That is, is it possible to construct a manifold $\calG$ and an error function 
$e : \calG \times \calM \to \calM$ and a synchronous lift $f^\dag : \vecL \to \gothX(\calG)$ for $f$?
Note that, a priori the manifold $\calG$ and the error function $e : \calG \times \calM \to \calM$ are not given, let alone the lift  $f^\dag$, and this question is significantly more challenging than asking simply if a given system $f^\dag : \vecL \to \gothX(\calG)$ contains a synchronous model for $f$. 

The following technical lemma relates the Lie algebras generated by a system $f$ and a synchronous lift $f^\dag$ of $f$.
In effect, it states that the Lie algebra generated by $f^\dag$ must contain a synchronous lift for every vector field in the Lie algebra of generated by $f$.

\begin{lemma}
\label{lem:synchronous_vector_fields}
Consider two systems $f^\dag : \vecL \to \gothX(\calG)$ and $f : \vecL \to \gothX(\calM)$ and suppose that $f^\dag$ is a synchronous lift of $f$.
Then, for every vector field $g \in \Lie(\image(f))$, there exists a synchronous lift $g^\dag \in \Lie(\image(f^\dag))$ (Def.~\ref{dfn:synchronous_lift}).
\hfill$\Box$\end{lemma}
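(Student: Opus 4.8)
The key idea is to exploit the fact that the Lie algebra $\Lie(\image(f))$ is generated from $\image(f)$ by iterated Lie brackets (the characterisation \eqref{eq:system_lie_algebra}), and to show that the set of vector fields admitting a synchronous lift in $\Lie(\image(f^\dag))$ is closed under the operations that generate $\Lie(\image(f))$: real-linear combinations and Lie brackets. Since $f^\dag$ is a synchronous lift of $f$, each $f^\dag_v$ is a synchronous lift of $f_v$, and subtracting the drift (recall $f_i = f_{\eb_i} - f_0$) the synchrony condition \eqref{eq:synchronous_lift} is itself $\R$-affine in $v$, so every generator $f_i^\dag$, $i \in \{0,\dots,\ell\}$, is a synchronous lift of the corresponding generator $f_i$. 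Thus the base case of the induction is in hand; what remains is to show the closure properties.

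For linearity, suppose $g^\dag, h^\dag \in \Lie(\image(f^\dag))$ are synchronous lifts of $g, h \in \Lie(\image(f))$ for the \emph{same} error function $e$. Then $\tD_{\hat X} e(\hat X,\xi)[(ag^\dag + bh^\dag)(\hat X)] = a\,\tD_{\hat X}e[g^\dag(\hat X)] + b\,\tD_{\hat X}e[h^\dag(\hat X)] = -a\,\tD_\xi e[g(\xi)] - b\,\tD_\xi e[h(\xi)] = -\tD_\xi e(\hat X,\xi)[(ag+bh)(\xi)]$, using linearity of the differential in the tangent argument. So $ag^\dag + bh^\dag$ is a synchronous lift of $ag + bh$, and it lies in $\Lie(\image(f^\dag))$ because that is a linear subspace.

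The substantive step is closure under Lie brackets: if $g^\dag$ is a synchronous lift of $g$ and $h^\dag$ of $h$, then $[g^\dag,h^\dag] \in \Lie(\image(f^\dag))$ is a synchronous lift of $[g,h]$. The plan is to phrase the synchrony condition at the level of flows rather than infinitesimally. Condition \eqref{eq:synchronous_lift} says precisely that along the diagonal-type curves, $e$ is invariant: $e(\Flow_{g^\dag}^t(\hat X), \Flow_g^t(\xi)) = e(\hat X,\xi)$ for all $t$ (this is \eqref{eq:ddt_e_synch_condition} integrated, using uniqueness of flows — note the sign convention pairs the flow of $g^\dag$ on $\calG$ with the flow of $g$ on $\calM$). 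Equivalently, writing $\Psi_g^t := (\Flow_{g^\dag}^t, \Flow_g^t)$ acting on $\calG \times \calM$, we have $e \circ \Psi_g^t = e$. Now substitute $g \leftrightarrow h$ and compose the commutator of flows as in the Lie bracket definition \eqref{eq:lie-bracket-dfn}: since $e \circ \Psi_g^t = e$ and $e \circ \Psi_h^s = e$ for all $s,t$, we get $e \circ (\Psi_h^{-\sqrt\tau} \circ \Psi_g^{-\sqrt\tau} \circ \Psi_h^{\sqrt\tau} \circ \Psi_g^{\sqrt\tau}) = e$ identically in $\tau$. Because $\Psi$ is a product flow, this composite equals $(\text{commutator of } \Flow_{g^\dag}, \Flow_{h^\dag} \text{ on } \calG,\ \text{commutator of } \Flow_g,\Flow_h \text{ on }\calM)$, whose $\tau$-derivative at $\tau = 0$ is $([g^\dag,h^\dag](\hat X), [g,h](\xi))$ by \eqref{eq:lie-bracket-dfn}. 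Differentiating $e$ composed with this curve at $\tau=0$ and applying the chain rule gives $\tD_{\hat X}e(\hat X,\xi)[[g^\dag,h^\dag](\hat X)] + \tD_\xi e(\hat X,\xi)[[g,h](\xi)] = 0$, which is exactly \eqref{eq:synchronous_lift} for the pair $[g^\dag,h^\dag]$, $[g,h]$. Finally $[g^\dag,h^\dag] \in \Lie(\image(f^\dag))$ since $\Lie(\image(f^\dag))$ is a Lie subalgebra. An induction on bracket depth along \eqref{eq:system_lie_algebra} then yields a synchronous lift in $\Lie(\image(f^\dag))$ for every $g \in \Lie(\image(f))$.

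I expect the main obstacle to be the completeness/flow issue: the flow-level identity $e \circ \Psi_g^t = e$ and the flow-commutator manipulation a priori only make sense for small $t$ (and small $\tau$), since the $f$-vector fields and their brackets need not be complete on $\calM$ or $\calG$. This is not fatal — the synchrony identity \eqref{eq:synchronous_lift} is pointwise and infinitesimal, so it suffices to work on the (open, nonempty) domain where all the relevant flows exist for short time, and the Lie bracket \eqref{eq:lie-bracket-dfn} is defined via a $\tau$-derivative at $0$ which only probes arbitrarily small times. One must just be careful to restrict attention to such a neighbourhood of $(\hat X,\xi)$ and invoke smoothness of $e$ there; the resulting identity then holds at the arbitrary point $(\hat X,\xi)$, hence everywhere. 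Alternatively, one can avoid flows entirely and argue purely infinitesimally using the local-coordinate formula $[g,h] = \tD h\,[g] - \tD f\,[g]$ together with $\ad$-notation, differentiating the two scalar identities $\tD_{\hat X}e[g^\dag] = -\tD_\xi e[g]$ and $\tD_{\hat X}e[h^\dag] = -\tD_\xi e[h]$ in the directions $h^\dag \times h$ and $g^\dag \times g$ respectively and subtracting; the mixed second-derivative (Hessian-type) terms of $e$ cancel by symmetry, leaving precisely the bracket identity. Either route is routine once set up; the flow-composition argument is cleaner and I would present that, flagging the short-time caveat explicitly.
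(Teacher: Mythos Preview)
Your proposal is correct and follows essentially the same approach as the paper: induction on the recursive generation of $\Lie(\image(f))$, with linearity handling the span and a flow-commutator computation (via \eqref{eq:lie-bracket-dfn}) handling the bracket step. The only cosmetic difference is that the paper transfers the four flows from the $\hat X$-argument to the $\xi$-argument one at a time (picking up a sign swap at each step, so the final bracket appears as $\ad_{g_2}g_1 = -[g_1,g_2]$), whereas you package both flows together as the product map $\Psi$ and differentiate the constancy of $e\circ\Psi$ directly; your explicit handling of the short-time flow domain is, if anything, more careful than the paper's.
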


\begin{proof}
From Def.~\ref{dfn:synchronous_lift} there exists an error function $e : \calG \times \calM \to \calM$ such that \eqref{eq:synchronous_lift} holds for all $v \in \vecL$. 
Define $L_0(f) = \mathrm{span}\{\image(f)\}$.
For every $k=0,1,...$, recursively define
\begin{align*}
    L_{k+1}(f) = L_k(f) + \set{[g_1,g_2] \in \gothX(\calM)}{g_1,g_2 \in L_k(f)},
\end{align*}
where the sum is the subspace sum of $L_k(f)$ and all brackets of of vector fields in $L_k(f)$.
Through induction over $k$ it will be shown that, for every $g \in L_k(f)$ there exists a synchronous lift $g^\dag \in L_k(f^\dag)$.

For the base case $k=0$, since $f^\dag$ contains a synchronous model of $f$, then $g^\dag := f_v^\dag \in \gothX(\calG)$ is a synchronous lift \eqref{eq:synchronous_lift} of any $g = f_v$.
This holds for $f_0$ and by linearity of \eqref{eq:ddt_e_synch_condition} for each $f_i = f_0 - f_{\eb_i}$ \eqref{eq:affine_system}.
It follows that for any $g \in \spn\{f_0, f_1, \ldots, f_\ell\} = L_0(f)$ there is a synchronous lift $g^\dag \in L_0(f^\dag)$.

Fix $n \in \N$ and assume that for each $g \in L_n(f)$ there exists a synchronous lift $g^\dag \in L_n(f^\dag)$.
Then, for any $g_1, g_2 \in L_n(f)$,
\begin{align*}
    \tD_{\hat{X}} & e(\hat{X}, \xi) [\ad_{g_1^\dag} g_2^\dag](\hat{X})\\
    &= \at{\ddt}{t=0} e(\Flow_{g_2^\dag}^{-\sqrt{t}}\circ\Flow_{g_1^\dag}^{-\sqrt{t}}\circ\Flow_{g_2^\dag}^{\sqrt{t}}\circ\Flow_{g_1^\dag}^{\sqrt{t}} (\hat{X}), \xi) \\
    &= \at{\ddt}{t=0} e(\Flow_{g_1^\dag}^{-\sqrt{t}}\circ\Flow_{g_2^\dag}^{\sqrt{t}}\circ\Flow_{g_1^\dag}^{\sqrt{t}} (\hat{X}), \Flow_{g_2}^{\sqrt{t}}(\xi)) \\
    &\vdots \\
    &= \at{\ddt}{t=0} e(\hat{X}, \Flow_{g_1}^{-\sqrt{t}}\circ\Flow_{g_2}^{-\sqrt{t}}\circ\Flow_{g_1}^{\sqrt{t}}\circ\Flow_{g_2}^{\sqrt{t}}(\xi)) \\
    &= \tD_\xi e(\hat{X}, \xi)[\ad_{g_2} g_1](\xi) \\
    &= - \tD_\xi e(\hat{X}, \xi)[\ad_{g_1} g_2](\xi).
\end{align*}
That is, $\ad_{g_1^\dag}g_2^\dag = [g_1^\dag, g_2^\dag] \in \gothX(\calG)$ and $\ad_{g_1}g_2 = [g_1,g_2] \in \gothX(\calM)$ satisfy \eqref{eq:synchronous_lift}.
By linearity, it follows that there is a synchronous lift $g^\dag \in L_{n+1}(f^\dag)$ for every $g \in L_{n+1}(f)$.

Since the construction of $\Lie(\image(f))$ is countable we conclude by induction that there exists a synchronous lift $g^\dag \in \gothX(\calG)$ for every $g \in \Lie(\image(f))$.
\end{proof}

The following theorem provides a necessary condition for the existence of a synchronous lift: the accessibility Lie algebra of $f:  \vecL \to \gothX(\calM)$ must be finite-dimensional and complete.

\begin{theorem}\label{thm:synchrony_lie_algebra}
Consider a system $f : \vecL \to \gothX(\calM)$.
If $f$ admits a synchronous lift then the Lie algebra generated by the system $\Lie(\image f)$ is complete and finite dimensional. 
\makebox[1mm]{}\hfill$\Box$\end{theorem}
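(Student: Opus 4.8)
The plan is to exploit Lemma~\ref{lem:synchronous_vector_fields}: a synchronous lift $f^\dag$ on $\calG$ gives, for every $g \in \Lie(\image(f))$, a synchronous lift $g^\dag \in \Lie(\image(f^\dag)) \subseteq \gothX(\calG)$ satisfying \eqref{eq:synchronous_lift}. The key observation is that the assignment $g \mapsto g^\dag$ is essentially a Lie algebra homomorphism whose behaviour is rigidly controlled by the error function $e$, and in particular that $g$ is completely determined by $g^\dag$ through $e$. Concretely, differentiating the synchrony relation \eqref{eq:synchronous_lift} and using Condition~\ref{item:error_reconstruction} of Definition~\ref{dfn:error_function} (that $\phi_{\mr\xi} = e_{\hat X}^{-1}(\mr\xi)$ is a surjective submersion), one recovers $f_v(\xi) = -(\tD_\xi e(\hat X,\xi))^{-1} \tD_{\hat X} e(\hat X,\xi)[f_v^\dag(\hat X)]$ whenever $\xi = e_{\hat X}^{-1}(\mr\xi)$ — this is exactly the content of the computation \eqref{eq:synchlift_2_lift}. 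So the map $g^\dag \mapsto g$ is the pushforward along the reconstruction map $\varphi$, which is linear and (since $e_{\hat X}$ is a diffeomorphism and $\varphi$ a submersion) injective on the relevant subalgebra evaluated pointwise.

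First I would show \textbf{finite-dimensionality}. Fix a point $\hat X_0 \in \calG$ and let $n = \dim\calG$. I claim the evaluation map $\Lie(\image(f)) \to \tT_{\varphi(\hat X_0)}\calM$, $g \mapsto g(\varphi(\hat X_0)) = \tD\varphi\,[g^\dag(\hat X_0)]$, together with the lift structure, forces $\Lie(\image(f))$ to inject into a finite-dimensional space. The cleaner route: pick $\mr\xi \in \calM$ and consider the map sending $g \in \Lie(\image(f))$ to its lift $g^\dag$ evaluated along the fibre $\varphi^{-1}(\mr\xi)$. Since $g$ is recovered from $g^\dag$ pointwise via \eqref{eq:synchlift_2_lift}, it suffices to bound the dimension of $\{g^\dag \mid g \in \Lie(\image(f))\} \subseteq \Lie(\image(f^\dag))$. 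But $\Lie(\image(f^\dag))$ need not be finite dimensional a priori, so instead I would argue directly on $\calM$: the subalgebra $\mathfrak h := \Lie(\image(f))$ acts on $\calM$, and the synchrony relation says $\mathfrak h$ is "parallelised" by $e$ — evaluating \eqref{eq:synchronous_lift} at a fixed $\hat X$ and varying $\xi$ shows $g \mapsto g(\xi)$ for $\xi \in \calM$ is controlled by $g^\dag(\hat X) \in \tT_{\hat X}\calG$, so the evaluation $\mathfrak h \to \tT_{\hat X}\calG$, $g \mapsto \tD_\xi e(\hat X,\xi)^{-1}$-preimage, gives a linear injection (injectivity because if $g^\dag(\hat X) = 0$ for the correct base point then $g(\varphi(\hat X)) = 0$, and running this over all $\hat X$ in the fibre and using that $\varphi$ is a submersion plus an ODE/flow-uniqueness argument forces $g \equiv 0$ as an element of the finitely-generated algebra). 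Hence $\dim\mathfrak h \le \dim\calG < \infty$.

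Then \textbf{completeness}: once $\mathfrak h = \Lie(\image(f))$ is finite dimensional, I would show each $g \in \mathfrak h$ is a complete vector field. Given $g$, take its synchronous lift $g^\dag$; the flow relation derived in Lemma~\ref{lem:synchronous_vector_fields}-style reasoning (using $e_{\hat X}(\Flow_g^t(\xi)) = e_{\Flow_{-g^\dag}^t(\hat X)}(\xi)$-type identities, i.e. synchrony of flows) expresses the flow of $g$ through $\hat\xi := \varphi(\hat X)$ in terms of the flow of $g^\dag$ composed with the diffeomorphism $e_{\hat X}$. The only obstruction to completeness of $g$ would be a trajectory escaping to infinity in finite time, but via the reconstruction relation $\xi(t) = e_{\hat X_0(t)}^{-1}(\mr\xi)$, with $\hat X_0(t)$ a flow line of $g^\dag$, any finite-time blowup of $\xi(t)$ must be matched by finite-time blowup of $\hat X_0(t)$. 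So completeness of $g$ follows once I know $g^\dag$ is complete — which is not automatic. I would instead avoid needing completeness of $g^\dag$ directly: since $\mathfrak h$ is now a finite-dimensional Lie algebra of vector fields on $\calM$, I can invoke Palais' theorem — a finite-dimensional Lie algebra of vector fields that is "transitive enough" or, more to the point, one for which the subalgebra generated acts via a group action — to conclude completeness; alternatively, I would integrate the abstract Lie algebra $\mathfrak h$ to a simply connected Lie group $\grpG$ and show the infinitesimal action $\mathfrak h \to \gothX(\calM)$ integrates to a global $\grpG$-action, the obstruction to which is precisely the failure of completeness, and rule this out using the synchrony structure. This last integration step — upgrading the finite-dimensional Lie algebra action to a genuine group action, i.e. proving completeness rather than just finite-dimensionality — is the main obstacle; I expect the authors close it by carefully transporting the (possibly incomplete, but locally well-defined) flows through the error function $e$ and patching, using that $e_{\hat X}$ is a global diffeomorphism of $\calM$ for every $\hat X$.
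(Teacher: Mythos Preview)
Your finite-dimensionality argument is essentially the paper's, just dualised: the paper defines a linear map $\pi : \tT_I\calG \to \gothX(\calM)$ by $\pi_\Delta(\xi) = -(\tD e_I(\xi))^{-1}\tD_{X|I}e(X,\xi)[\Delta]$ and checks $\pi_{g^\dag(I)} = g$ for every $g \in \Lie(\image f)$, so the whole algebra sits inside the image of a linear map with finite-dimensional domain. Your version is the same observation phrased as an injection into $\tT_{\hat X}\calG$. Note, though, that your map $g \mapsto g^\dag(\hat X)$ is not well-defined as written, since Lemma~\ref{lem:synchronous_vector_fields} only gives existence of a lift, not uniqueness; the clean statement is exactly the paper's ``$\Lie(\image f) \subseteq \image(\pi)$'', which needs no well-definedness. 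Your surrounding discussion of fibres and ODE-uniqueness is unnecessary once you see the one-line formula $g(\xi) = -(\tD_\xi e(\hat X,\xi))^{-1}\tD_{\hat X}e(\hat X,\xi)[g^\dag(\hat X)]$ already determines $g$ on all of $\calM$ from the single tangent vector $g^\dag(\hat X)$.

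The completeness argument has a genuine gap. Your Palais-theorem route is circular: Palais' integration theorem requires completeness of the vector fields as a hypothesis, so it cannot be used to prove it. You abandon the flow-transport idea too early. The point you are missing is that you do \emph{not} need $g^\dag$ to be complete, nor do you need any patching. A single local integral curve $\gamma^\dag : (-\varepsilon,\varepsilon) \to \calG$ of $g^\dag$ through one fixed point $I$ already yields, via $\gamma(\xi,t) := e_{\gamma^\dag(t)}^{-1}(e(I,\xi))$, the flow of $g$ through \emph{every} $\xi \in \calM$ on the \emph{same} interval $(-\varepsilon,\varepsilon)$, because $e_{\gamma^\dag(t)}$ is a global diffeomorphism of $\calM$. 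A vector field whose flow exists at every point for a uniform positive time is complete by the uniform time lemma (e.g.\ Lee, Lemma~9.15). That is the whole argument; your final speculation about ``transporting locally well-defined flows through $e$'' is correct in spirit but the uniform-time observation is what makes it a proof rather than a hope.
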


\begin{proof}
If $f$ admits a synchronous lift then there exists a manifold $\calG$, an error $e : \calG \times \calM \to \calM$ and a function $f^\dag : \vecL \to \gothX(\calG)$ that contains a synchronous model of $f$, 
We begin by showing that $\Lie(\image f)$ is finite-dimensional.
Fix an arbitrary element $I \in \calG$ and define the map $\pi : \tT_I \calG \to \gothX(\calM)$ by
\begin{align*}
    \pi_\Delta(\xi) &= \tD_{X|I} e_X^{-1}(e(I, \xi)) [\Delta],
\end{align*}
for all $\Delta \in \tT_I \calG$ and $\xi \in \calM$.

Consider that for any $\xi \in \calM$ and $X \in \calG$ one has $e_X^{-1}(e(X,\xi)) = \xi$.
Then, for any $\Delta \in \tT_I \calG$,
\begin{align*}
    0 & =
    \tD_{X|I} \left(e_X^{-1}(e(X,\xi))\right) [\Delta] \\
    &= \tD_{X|I} e_X^{-1}(e(I,\xi)) [\Delta]
    + \tD_{z|e(I,\xi)} e_I^{-1}(z) \tD_{X|I} e(X,\xi) [\Delta],
\end{align*}
and therefore
\begin{align}\label{eq:pi_map_characterisation}
\pi_\Delta(\xi) &=
- \tD_{z|e(I,\xi)} e_I^{-1}(z) \tD_{X|I} e(X,\xi) [\Delta] \notag\\
&= - \left(\tD e_I(\xi) \right)^{-1} \tD_{X|I} e(X,\xi) [\Delta]
\end{align}

Let $g \in \Lie(\image f)$ be arbitrary.
By Lemma \ref{lem:synchronous_vector_fields}, there exists a synchronous lift $g^\dag \in \gothX(\calG)$ of $g$.
Applying \eqref{eq:pi_map_characterisation},
\begin{align*}
    \pi_{g^\dag(I)}(\xi)
    = \tD_{X|I} e_X^{-1} & (e(I, \xi)) [g^\dag(I)]
    = g(e_I^{-1}(e(I, \xi)))
    = g(\xi),
\end{align*}
for all $\xi \in \calM$.
Since $g \in \Lie(\image f)$ was chosen arbitrarily, we see that $\Lie(\image f)$ lies entirely in the image of $\pi$, which is necessarily finite-dimensional as $\pi$ is a linear map with finite-dimensional domain $\tT_I\calG$.

It remains to show that each vector field $g \in \Lie(\image f)$ is complete.
Let $g \in \Lie(\image f)$ be arbitrary and choose a synchronous lift $g^\dag \in \Lie(\image f^\dag) \subset \gothX(\calG)$ for $g$.
Define $\gamma^\dag : (-\varepsilon, \varepsilon) \to \calG$ to be an integral curve of $g^\dag$ with $\gamma^\dag(0) = I$, and define $\gamma(\xi, t) = e_{\gamma^\dag(t)}^{-1}(e(I,\xi))$ for all $\xi \in \calM$ and $t \in (-\varepsilon, \varepsilon)$.

Recalling  \eqref{eq:synchronous_lift}, one has
\begin{align*}
    \ddt \gamma(\xi, t) &= \ddt e_{\gamma^\dag(t)}^{-1}(e(I,\xi)) \\
    &= \tD_{X | \gamma^\dag(t)} e_X^{-1}(e(I,\xi)) [g^\dag(\gamma^\dag(t))] \\
    &= -(\tD_{\zeta | e_{\gamma^\dag(t)}^{-1}(z)} e(\gamma^\dag(t), \zeta))^{-1}
    \tD_{X|\gamma^\dag(t)} e(X, e_{\gamma^\dag(t)}^{-1}(z))[g^\dag(\gamma^\dag(t))] \\
    &= g(e_{\gamma^\dag(t)}^{-1}(e(I,\xi))) \\
    &= g(\gamma(\xi, t)).
\end{align*}
Hence $\gamma(\xi, t) = \Flow_g^t(\xi)$ for all $\xi \in \calM$ and $t \in (-\varepsilon, \varepsilon)$, and the flow of $g$ exists at every point $\xi$ for a fixed minimum interval $(-\varepsilon, \varepsilon)$.
By the uniform time lemma \cite[Lemma 9.15]{2012_lee_IntroductionSmoothManifolds}, it follows that the flow may be defined for all $t$ and thus $g$ is complete.
Since $g$ was arbitrary, this proves that $\Lie(\image f)$ is complete.
\end{proof}

Although the converse is not true in general, it does hold for controllable systems.
The following lemma provides the constructive process to build a Lie group and group action for a system with finite-dimensional accessibility algebra. 
These will be used later in Theorem \ref{thm:group_action_system} to construct a lifted system $f^\dag$. 

\begin{lemma}\label{lem:faithful_group_action}
    Let $\gothL \subset \gothX(\calM)$ be a Lie subalgebra of vector fields on a manifold $\calM$.
    If $\gothL$ is finite-dimensional and complete, then there exists a unique connected Lie group $\grpG$ along with a faithful action $\phi : \grpG \times \calM \to \calM$ such that $\tD \phi(I) : \gothg \to \gothL$ is a Lie algebra isomorphism.
\end{lemma}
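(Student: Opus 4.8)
The plan is to realise $\grpG$ as a quotient of the simply connected Lie group associated to $\gothL$, to obtain $\phi$ by integrating the infinitesimal action carried by $\gothL$, and then to use faithfulness to pin this pair down uniquely. By Lie's third theorem there is a connected, simply connected Lie group $\tilde{\grpG}$, unique up to isomorphism, whose Lie algebra $\tilde{\gothg}$ admits a Lie algebra isomorphism $\iota : \tilde{\gothg} \to \gothL \subset \gothX(\calM)$. The crux is to integrate $\iota$ to a smooth right action $\tilde{\phi} : \tilde{\grpG} \times \calM \to \calM$ with $\tD\tilde{\phi}(I) = \iota$, i.e.\ $\tilde{\phi}^\sharp_U = \iota(U)$ for all $U \in \tilde{\gothg}$.

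For this I would run the classical graph-foliation argument (Palais): on $\tilde{\grpG} \times \calM$ form the distribution
\begin{align*}
\calD_{(X,\xi)} = \spn\set{\,\bigl(XU,\ \iota(U)(\xi)\bigr) \in \tT_X\tilde{\grpG} \times \tT_\xi\calM\,}{U \in \tilde{\gothg}}.
\end{align*}
Since $U \mapsto (\tR^\sharp_U, \iota(U))$ is a Lie algebra homomorphism (both $\tR^\sharp$ and $\iota$ are) and is injective (because $\tR^\sharp_U = 0$ forces $U = 0$), the distribution $\calD$ is smooth, involutive, of constant rank $\dim\tilde{\gothg}$, hence integrable. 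Its spanning vector fields are complete: each $\tR^\sharp_U$ is a left-invariant field on a Lie group, each $\iota(U) \in \gothL$ is complete by hypothesis, and a product of complete vector fields is complete. Completeness forces the leaf of $\calD$ through $(I,\xi)$ to project onto $\tilde{\grpG}$ as a covering map, and since $\tilde{\grpG}$ is simply connected this projection is a diffeomorphism; therefore the leaf is the graph of a smooth map $X \mapsto \tilde{\phi}(X,\xi)$ with $\tilde{\phi}(I,\xi) = \xi$. Smooth dependence on $\xi$ and the identity $\tilde{\phi}(XY,\xi) = \tilde{\phi}(Y, \tilde{\phi}(X,\xi))$ then follow from uniqueness of leaves together with left-invariance of the $\tR^\sharp_U$. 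I would cite Palais for the details; the point to stress is that completeness of \emph{all} of $\gothL$, not merely of a generating set, is exactly what makes the projection a covering.

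Next let $N = \set{X \in \tilde{\grpG}}{\tilde{\phi}(X,\cdot) = \id_\calM}$ be the kernel of the action, a closed normal subgroup. Its Lie algebra is $\set{U \in \tilde{\gothg}}{\iota(U) = 0} = \{0\}$ because $\iota$ is injective, so $N$ is discrete, and as a discrete normal subgroup of the connected group $\tilde{\grpG}$ it is central. Set $\grpG := \tilde{\grpG}/N$; the quotient projection $\pi : \tilde{\grpG} \to \grpG$ is a covering homomorphism, so $\grpG$ is connected with $\Lie(\grpG) \cong \tilde{\gothg} \cong \gothL$, and $\tilde{\phi}$ descends to a smooth right action $\phi : \grpG \times \calM \to \calM$ with $\tilde{\phi} = \phi \circ (\pi \times \id)$. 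By construction $\phi$ is faithful, and from $\tD\phi(I) \circ \tD\pi(I) = \tD\tilde{\phi}(I) = \iota$ with $\tD\pi(I)$ a Lie algebra isomorphism we get that $\tD\phi(I) : \Lie(\grpG) \to \gothL$ is a Lie algebra isomorphism, which settles existence.

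For uniqueness, let $(\grpG', \phi')$ be any connected Lie group with a faithful action on $\calM$ such that $\tD\phi'(I)$ is a Lie algebra isomorphism onto $\gothL$. Then $\Lie(\grpG') \cong \gothL \cong \tilde{\gothg}$, so the universal cover of $\grpG'$ is $\tilde{\grpG}$, say via a covering homomorphism $\pi'$ with discrete central kernel $N'$, and $\tilde{\phi}' := \phi' \circ (\pi' \times \id)$ is a right action of $\tilde{\grpG}$ with $\tD\tilde{\phi}'(I) = \iota$ after the canonical identifications. Two actions of a connected Lie group with the same infinitesimal generator coincide: for fixed $\xi$ both $t \mapsto \tilde{\phi}(\exp(tU),\xi)$ and $t \mapsto \tilde{\phi}'(\exp(tU),\xi)$ are integral curves of $\iota(U)$ through $\xi$, hence equal, so the two actions agree on a neighbourhood of $I$, and the action axiom propagates this agreement to the whole connected group $\tilde{\grpG}$. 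Thus $\tilde{\phi}' = \tilde{\phi}$, so $N' = \ker\tilde{\phi}' = \ker\tilde{\phi} = N$ (faithfulness of $\phi'$ identifies $\ker\tilde{\phi}'$ with $N'$), whence $\grpG' \cong \grpG$ with $\phi'$ corresponding to $\phi$. The main obstacle is the integration step, which turns the algebraic datum $\iota$ into a genuine group action and is where completeness of $\gothL$ is indispensable; the quotient construction and the uniqueness argument are then routine Lie theory.
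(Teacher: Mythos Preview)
Your proof is correct and follows essentially the same route as the paper: invoke Lie's third theorem to get the simply connected group $\tilde{\grpG}$, cite Palais to integrate the infinitesimal action $\iota$ to a global action $\tilde{\phi}$, quotient by the (discrete, normal) kernel to obtain a faithful action, and establish uniqueness by lifting any competitor to $\tilde{\grpG}$ and comparing kernels. The paper does exactly this, citing \cite[Theorem~XVIII]{1957_palais_GlobalFormulationLie} as a black box where you instead sketch the graph-foliation argument, and arguing uniqueness by a short contradiction on $\grpK \neq \grpK'$ where you instead show directly that the two lifted actions coincide via their common infinitesimal generator; your version is slightly more explicit on both points but the strategy is identical.
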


\begin{proof}
Let $\gothg$ is an abstract real Lie algebra that is isomorphic to $\gothL$ by a map $L : \gothg \to \gothL \subset \gothX(\calM)$.
Then let $\grpG^s$ be the unique simply connected Lie group associated with $\gothg$.
By \cite[Theorem XVIII]{1957_palais_GlobalFormulationLie}, there exists a right Lie group action $\phi^s : \grpG^s \times \calM \to \calM$ such that
\begin{align*}
\tD \phi^s_\xi(I) [U] = L(U)(\xi),
\end{align*}
for all $U \in \gothg$ and $\xi \in \calM$.

To obtain a faithful group action, let 
\[
    \grpK = \cset{X \in \grpG^s}{\phi^s(X, \xi) = \xi \quad \forall \xi \in \calM}.
\]
Then $\grpK$ is a discrete normal subgroup of $\grpG^s$, and we may define the quotient Lie group $\grpG = \grpG^s / \grpK$ \cite[Theorem 21.26]{2012_lee_IntroductionSmoothManifolds} along with the quotient action $\phi : \grpG \times \calM \to \calM$.
This action is necessarily faithful, and the Lie algebra of $\grpG$ is unchanged since $\grpK$ is discrete.

For uniqueness of $\grpG$, suppose there exists another connected Lie group $\grpG'$ and faithful action $\phi' : \grpG' \times \calM \to \calM$ such that $\gothg$ is the Lie algebra of $\grpG'$ and $\tD \phi'(I) : \gothg \to \gothL$ is an isomorphism.
Then $\grpG^s$ must also be be the simply connected universal covering of $\grpG'$ and therefore $\grpG' = \grpG^s / \grpK'$ for some discrete normal subgroup $\grpK'$.
Suppose now that $\grpK \neq \grpK'$ and, without loss of generality, assume that there exists some $S \in \grpK$ with $S \not\in \grpK'$.
It follows that the coset $S \grpK' \in \grpG'$ but also that $\phi'(S\grpK',\xi) = \phi(S, \xi) = \xi$, which contradicts the assumption that the action $\phi'$ is faithful.
Therefore, it must be that $\grpK = \grpK'$ and hence $\grpG = \grpG'$, proving that $\grpG$ is indeed unique.
\end{proof}

\begin{theorem}\label{thm:group_action_system}
    Consider a controllable system $f : \vecL \to \gothX(\calM)$ and suppose that the Lie algebra generated by the system $\Lie(\image f)$ is finite-dimensional and complete.
    Then there exists a transitive group action $\phi : \grpG \times \calM \to \calM$ for a Lie group $\grpG$ and a synchronous lift $f^\dag : \vecL \to \gothX(\grpG)$ for $f$ on $\grpG$ (Def.~\ref{dfn:synchronous_lift}).
\hfill$\Box$
\end{theorem}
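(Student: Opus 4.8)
The plan is to obtain the required group and action from Lemma~\ref{lem:faithful_group_action}, upgrade the resulting faithful action to a transitive one using controllability, and then exhibit an explicit error function together with a left-invariant lifted system and check the synchrony identity \eqref{eq:synchronous_lift} by a short flow computation.

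First I would set $\gothL := \Lie(\image f) \subset \gothX(\calM)$, which is finite-dimensional and complete by hypothesis, and apply Lemma~\ref{lem:faithful_group_action} to obtain a connected Lie group $\grpG$ with Lie algebra $\gothg$ and a faithful right action $\phi : \grpG \times \calM \to \calM$ such that $\phi^\sharp : \gothg \to \gothL$, $U \mapsto \phi^\sharp_U$, is a Lie algebra isomorphism. To show $\phi$ is transitive, observe that controllability gives $D_{\Lie(\image f)}(\xi) = \tT_\xi\calM$ for every $\xi$, and since $\phi^\sharp$ is onto $\gothL$ this reads $\spn\{\phi^\sharp_U(\xi) : U \in \gothg\} = \tT_\xi\calM$. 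Each orbit $\calO_\xi$ is an immersed submanifold whose tangent space at any of its points is exactly this span, hence $\calO_\xi$ is open in $\calM$; as the orbits partition the connected manifold $\calM$, there is only one orbit, so every partial map $\phi_\xi : \grpG \to \calM$, $X \mapsto \phi(X,\xi)$, is a surjective submersion and $\phi$ is transitive.

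Next I would build the lift. Since $\image f \subset \gothL$ and $\phi^\sharp$ is an isomorphism, the composite $\Lambda := (\phi^\sharp)^{-1} \circ f : \vecL \to \gothg$ is affine; define $f^\dag_v := \tR^\sharp_{\Lambda(v)} \in \gothX(\grpG)$, so $f^\dag_v(X) = X\Lambda(v)$, which is an affine system on $\grpG$. For the error I would take $e : \grpG \times \calM \to \calM$, $e(X,\xi) := \phi(X^{-1}, \xi)$; this is an error function (Definition~\ref{dfn:error_function}) because each $e_X = \phi_{X^{-1}}$ is a diffeomorphism of $\calM$ with inverse $\phi_X$, and for fixed $\mr{\xi}$ the map $X \mapsto e_X^{-1}(\mr{\xi}) = \phi(X,\mr{\xi})$ is the partial action map, a surjective submersion by transitivity. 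To verify \eqref{eq:synchronous_lift}, I would differentiate $e(X(t),\xi)$ along $X(t) = X\exp(t\Lambda(v))$: using $X(t)^{-1} = \exp(-t\Lambda(v))X^{-1}$, the right-action identity $\phi(AB,\xi) = \phi(B,\phi(A,\xi))$, and $\at{\ddt}{t=0}\phi(\exp(-t\Lambda(v)),\xi) = -\phi^\sharp_{\Lambda(v)}(\xi) = -f_v(\xi)$, one obtains
\begin{align*}
\tD_{X} e(X,\xi)[X\Lambda(v)] = \tD\phi_{X^{-1}}(\xi)\left[-f_v(\xi)\right] = -\tD_\xi e(X,\xi)[f_v(\xi)],
\end{align*}
which is precisely \eqref{eq:synchronous_lift}. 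Hence $f^\dag$ is a synchronous lift of $f$ on the transitive $\grpG$-space, as required.

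The step I expect to be the main obstacle is the passage from faithfulness to transitivity: Lemma~\ref{lem:faithful_group_action} only yields a faithful action, and promoting the pointwise spanning property coming from controllability to genuine transitivity needs the orbit theorem plus connectedness of $\calM$. The remaining pieces — the particular choices $e(X,\xi) = \phi(X^{-1},\xi)$ and $f^\dag_v = \tR^\sharp_{\Lambda(v)}$, and the verification of \eqref{eq:synchronous_lift} — are then essentially mechanical, with the only real care needed in the sign bookkeeping when differentiating $\phi(X(t)^{-1},\xi)$ with respect to $t$.
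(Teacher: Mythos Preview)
Your proposal is correct and follows essentially the same approach as the paper: invoke Lemma~\ref{lem:faithful_group_action} for the group and faithful action, use controllability to upgrade to transitivity, set $e(X,\xi)=\phi(X^{-1},\xi)$ and $f^\dag_v(X)=X\Lambda(v)$ with $\Lambda=(\phi^\sharp)^{-1}\circ f$, and verify \eqref{eq:synchronous_lift} directly. Your treatment of the transitivity step (orbits are open by the spanning condition, hence a single orbit by connectedness) is actually more explicit than the paper's, which simply asserts it; your flow-based verification of the synchrony identity is a cosmetic variant of the paper's chain-rule computation \eqref{eq:pf:synch_lift}--\eqref{eq:pf_synch}.
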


\begin{proof}
From Lemma \ref{lem:faithful_group_action}, and recalling that $f$ is controllable, we have that there exists a Lie group $\grpG$ and a faithful action $\phi : \grpG \times \calM \to \calM$ such that $L := \tD \phi(I) : \gothg \to \gothL := \Lie(\image f)$ is a Lie algebra isomorphism.
Since the system is controllable, $D_\gothL$ spans the whole tangent space, and the group action $\phi$ must be transitive.

Define a candidate error function $e : \grpG \times \calM \to \calM$ by
\begin{align}
e(\hat{X},\xi) = \phi(\hat{X}^{-1}, \xi).
\label{eq:synch_error}
\end{align}
Smoothness of $e$ is immediate.
The family of partial maps $e_{\hat{X}} = \phi_{\hat{X}^{-1}} = \phi_{\hat{X}}^{-1}$ are diffeomorphisms of $\calM \to \calM$ due to being group actions.
For each $\mr{\xi} \in \calM$, the partial map $\varphi = e^{-1}_{\hat{X}}(\mr{\xi}) = \phi(\hat{X}, \mr{\xi})$ is a surjective submersion due to $\phi$ being transitive.
This verifies that $e$ is an error function according to Definition~\ref{dfn:error_function}.

It remains to show that there is a synchronous lift $f^\dag$ of $f$.
Recall that the group action $\phi$ is obtained by integrating the flow of vector fields in $\gothL$ \cite[Theorem XVIII]{1957_palais_GlobalFormulationLie} by using the isomorphism $L : \gothg \to \gothL$.
Thus, since $\image f \subseteq \gothL$, it is possible to define the affine map
\begin{align*}
    \Lambda : \vecL \to \gothg,
    && \Lambda(v) = L^{-1} (f_v).
\end{align*}
It follows that
\[
\tD \phi_\xi(I) [\Lambda(v)] = L(\Lambda(v)) = f_v(\xi),
\]
for all $v \in \vecL$ and $\xi \in \calM$.
Define the candidate lifted system $f^\dag : \vecL \to \gothX(\grpG)$ to be
\[
f^\dag_v(\hat{X}) := \tD \tL_{\hat{X}} \Lambda(v) = \hat{X} \Lambda(v).
\]
Recalling Definition~\ref{dfn:synchronous_lift} one computes 
\begin{align} 
\tD_{\hat{X}} e(\hat{X}, \xi) [f^\dag_v(\hat{X})] %
= & \tD_{\hat{X}} \phi_\xi (\hat{X}^{-1}) \tD \tL_{\hat{X}} [\Lambda(v)] \label{eq:pf:synch_lift} \\ 
= & \tD_{Z|\Id} \phi_\xi ((\hat{X} Z)^{-1}) [\Lambda(v)] \notag \\ 
= & \tD_{Z|\Id} \phi (\hat{X}^{-1}, \phi_\xi(Z^{-1})) [\Lambda(v)] \notag \\ 
= & \tD \phi_{\hat{X}^{-1}} \cdot \tD_{Z|\Id} \phi_\xi (Z^{-1}) [\Lambda(v)] \notag \\ 
= & \tD \phi_{\hat{X}^{-1}} [- f_v(\xi)]\label{eq:pf_synch} \\ 
= & - \tD_{\xi} e (\hat{X},\xi) [f_v(\xi)] \notag 
\end{align}
where the negative in \eqref{eq:pf_synch} comes from differentiating $Z^{-1}$ at the identity. 
This shows that $f^\dag$ is a synchronous lift and completes the proof.
\end{proof}

We emphasise that Theorem~\ref{thm:group_action_system} is constructive.
If the Lie algebra generated by the system $\Lie(\image f)$ is finite-dimensional and complete, then a synchronous model can be constructed by identifying the Lie group $\grpG$ corresponding to $\Lie(\image f)$ by following the steps in Lemma \ref{lem:faithful_group_action} and then building $f^\dag$ from the associated group action. 
An example of this process is provided in \S\ref{ex:unicycle}.

\section{Fundamental Systems}
\label{sec:fundamental_systems}

Theorems \ref{thm:synchrony_lie_algebra} and \ref{thm:group_action_system} in Section~\ref{sec:symmetry_construction} show that the property that a system admits a synchronous lift is intrinsically linked to existence of a homogeneous structure (transitive Lie group action) on the system state space that is compatible with the system dynamics. 
Systems on homogeneous spaces that are compatible with a group action are termed \emph{equivariant systems} and have been studied in the systems and control literature for many years 
\cite{1985_grizzle_StructureNonlinearControl,2007_bonnabel_ObservateursAsymptotiquesInvariants,2022_mahony_ObserverDesignNonlinear}. 
However, not all equivariant systems admit a synchronous lift and it is relevant to characterise the subclass of systems that do. 
In this section, we define a subset of equivariant systems that we term \emph{fundamental systems} and show that there is a one-to-one correspondence between fundamental systems and systems that admit a synchronous lift. 

Consider a homogeneous manifold $\calM$ with associated Lie group $\grpG$ and transitive right group action $\phi : \grpG \times \calM \to \calM$. 
Every element $u \in \gothg$ induces a \emph{fundamental vector field} $\phi^\sharp_u \in \gothX(\calM)$ \eqref{eq:fundamental_vector_field} on the manifold $\calM$ \cite{1963_kobayashi_FoundationsDifferentialGeometry}. 
Recalling that $\phi$ is a right-handed group action, the mapping $\phi^\sharp_{\cdot} : \gothg \to \gothX(\calM)$ is a Lie algebra homomorphism and is injective if $\phi$ is faithful.
The definition of a \emph{fundamental system} extends the concept of a fundamental vector field to the setting of systems with inputs.

\begin{definition}[Fundamental System]
\label{dfn:fundamental_system}
Let $f: \vecL \to \gothX(\calM)$ be a system on a homogeneous space $\calM$ with transitive group action $\phi : \grpG \times \calM \to \calM$ for a Lie group $\grpG$. 
We say that $f$ is a \emph{fundamental system} if there exists an affine map $\Lambda : \vecL \to \gothg$ for which
\begin{align}\label{eq:fundamental_system}
f_v(\xi) &:= \phi^\sharp_{\Lambda(v)}(\xi) =  \tD \phi_\xi(I)[\Lambda(v)],
\end{align}
for all $\xi \in \calM$ and $v \in \gothg$.
\hfill$\Box$\end{definition}

Fundamental systems extend the existing concepts of group-linear \cite{2010_jouan_EquivalenceControlSystems} and group-affine \cite{2017_barrau_InvariantExtendedKalman} systems already present in the literature. 
The existing concepts are developed for systems on Lie groups and do not capture the key structure of fundamental vector fields that is core to understanding and extending to systems on manifolds. 
This leads to the main result in the paper. 

\begin{theorem}\label{thm:synchrony_fundamental}
Consider a controllable system $f : \vecL \to \gothX(\calM)$ on a smooth manifold $\calM$. 
Then $f$ admits a synchronous lift if and only if $f$ is fundamental. 
\end{theorem}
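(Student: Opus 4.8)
The plan is to prove the two directions separately, leaning heavily on the machinery already assembled. For the "if" direction (fundamental $\Rightarrow$ admits a synchronous lift), suppose $f_v = \phi^\sharp_{\Lambda(v)}$ for a transitive action $\phi : \grpG \times \calM \to \calM$ and affine $\Lambda : \vecL \to \gothg$. I would simply exhibit the lift directly: take $\calG = \grpG$, define the error function $e(\hat X, \xi) = \phi(\hat X^{-1}, \xi)$ exactly as in \eqref{eq:synch_error}, and set $f^\dag_v(\hat X) := \hat X \Lambda(v)$. The verification that $e$ is an error function (partial maps $e_{\hat X}$ are diffeomorphisms, the reconstruction maps are surjective submersions by transitivity) and that \eqref{eq:synchronous_lift} holds is then \emph{verbatim} the computation \eqref{eq:pf:synch_lift}--\eqref{eq:pf_synch} from the proof of Theorem \ref{thm:group_action_system}; the only thing to check is that $\tD\phi_\xi(I)[\Lambda(v)] = f_v(\xi)$, which is just Definition \ref{dfn:fundamental_system}. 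So this direction is essentially a corollary of Theorem \ref{thm:group_action_system}'s construction, and I would present it compactly.

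For the "only if" direction (admits a synchronous lift $\Rightarrow$ fundamental), suppose $f$ is controllable and admits a synchronous lift. By Theorem \ref{thm:synchrony_lie_algebra}, $\gothL := \Lie(\image f)$ is finite-dimensional and complete. Now invoke Theorem \ref{thm:group_action_system}: controllability plus finite-dimensional-and-complete $\gothL$ yields a \emph{transitive} group action $\phi : \grpG \times \calM \to \calM$ with $L := \tD\phi(I) : \gothg \to \gothL$ a Lie algebra isomorphism. Since $\image f \subseteq \gothL = \image L$, the affine map $\Lambda := L^{-1} \circ f : \vecL \to \gothg$ is well-defined and affine (composition of the affine $f$ with the linear isomorphism $L^{-1}$), and by construction $\phi^\sharp_{\Lambda(v)}(\xi) = \tD\phi_\xi(I)[L^{-1}(f_v)] = L(L^{-1}(f_v))(\xi) = f_v(\xi)$ for all $\xi, v$. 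Hence $f$ is fundamental for this action $\phi$ with this $\Lambda$, which is precisely Definition \ref{dfn:fundamental_system}.

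The main subtlety — and the step I would be most careful about — is that Definition \ref{dfn:fundamental_system} asks for $\calM$ to carry a transitive action making $f$ fundamental, and I am producing that action rather than being handed it; I must make sure the action from Theorem \ref{thm:group_action_system} genuinely lands $\image f$ inside the image of $L$ and that the identification $L(U)(\xi) = \tD\phi_\xi(I)[U]$ holds pointwise (this is guaranteed by Lemma \ref{lem:faithful_group_action} via Palais' theorem, so I would cite it rather than reprove it). A secondary point worth a sentence is the logical status of the equivalence: one direction uses controllability essentially (to get \emph{transitivity}, hence a genuine homogeneous-space structure and the surjective-submersion property of the reconstruction map), while the "if" direction does not need controllability at all — Definition \ref{dfn:fundamental_system} already presupposes a transitive action. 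I expect no serious obstacle: the theorem is really a repackaging of Theorems \ref{thm:synchrony_lie_algebra} and \ref{thm:group_action_system} through the lens of Definition \ref{dfn:fundamental_system}, so the proof should be short, with the bulk of the work being to correctly assemble $\Lambda$ and confirm it is affine.
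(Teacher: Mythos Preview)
Your proposal is correct and follows essentially the same approach as the paper's own proof: the ``if'' direction exhibits the fundamental lift $f^\dag_v(\hat X) = \hat X\Lambda(v)$ with error $e(\hat X,\xi) = \phi(\hat X^{-1},\xi)$ and reuses the computation \eqref{eq:pf:synch_lift}--\eqref{eq:pf_synch}, while the ``only if'' direction chains Theorem~\ref{thm:synchrony_lie_algebra} into Theorem~\ref{thm:group_action_system} and reads off $\Lambda = L^{-1}\circ f$ to verify Definition~\ref{dfn:fundamental_system}. Your remarks on where controllability is actually used and on the need to confirm $\Lambda$ is affine are apt and match the paper's logic.
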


\begin{proof}
If $f$ is a fundamental system then there is a transitive group action $\phi : \grpG \times \calM \to \calM$ for a Lie group $\grpG$ and an input map $\Lambda : \vecL \to \gothg$ such that $f_v = \phi^\sharp_{\Lambda(u)}$. 
We will term the system 
\begin{align}
\dot{\hat{X}} = f^\dag_v(\hat{X})  := \hat{X} \Lambda(v). 
\label{eq:fundamental_lift}
\end{align}
a \emph{fundamental lift} of $f$. 
Define an error 
\[
e := \phi(\hat{X}^{-1}, \xi).
\]
Then repeating the derivation in \eqref{eq:pf:synch_lift} shows that $f^\dag$ is a synchronous lift for $f$ Def.~\ref{dfn:synchronous_lift}. 

In the reverse direction, assume that $f$ admits a synchronous lift. 
Then by Theorem \ref{thm:synchrony_lie_algebra}, the Lie algebra generated by the system $\Lie(\image f)$ is complete and finite dimensional. 
Theorem \ref{thm:group_action_system} provides a construction for a group $\grpG$, group action $\phi : \grpG \times \calM \to \calM$ and function $\Lambda : \vecL \to \gothg$ such that 
\[
f^\dag_v(\hat{X}) := \tD \tL_{\hat{X}} \Lambda(v) = \hat{X} \Lambda(v)
\]
is a synchronous lift for $f$ with respect to the error $e = \phi(X^{-1}, \xi)$. 
By construction 
\[
f_v(\xi) = \tD \phi_\xi [f^\dag (\Id)] 
= \tD \phi_\xi [\Lambda(v)] = \phi_{\Lambda(v)}^\sharp. 
\] 
It follows that $f$ is a fundamental system.
\end{proof}

Theorems \ref{thm:group_action_system} and \ref{thm:synchrony_fundamental} together show that $\Lie(\image (f))$ must always be a sub-algebra of the algebra $\gothg$ of the symmetry group $\grpG$ of a fundamental system. 
Indeed, for a system $f$, the smallest symmetry group that corresponds to a fundamental structure for that system is the Lie group directly associated with the Lie-algebra $\Lie(\image (f))$ (see Lemma \ref{lem:faithful_group_action}). 
It is often useful, however, to work with a larger symmetry Lie group $\grpG' \geq \grpG$ that has a better algebraic structure or acts on additional degrees of freedom in the system state-space that are not captured directly by the system dynamics. 
For example, Theorem \ref{thm:synchrony_fundamental} assumes that the system $f$ is controllable; that is, that the distribution generated by the accessibility Lie-algebra of the system is full rank. 
If the Lie algebra generated by the system is finite-dimensional and complete but its accessibility distribution $D_{\Lie(\image(f))}$ does not span the whole tangent space, then the Lie group construction provided in Lemma \ref{lem:faithful_group_action} will generate orbits that are integral manifolds immersed in $\calM$ rather than the whole space. 
That is, the group action constructed is not transitive on $\calM$. 
In many cases, it is still possible to construct a larger Lie group $\grpG' \geq \grpG$ that contains the minimal symmetry group $\grpG$ as a subgroup, and an extension of the group action $\left. \phi'\right|_{\grpG} = \phi$ that acts transitively on the whole manifold. 
The synchronous lift in the new degrees of freedom, that act on the inaccessible directions on $\calM$, will be zero to ensure that associated projected vector fields preserve the integral manifold structure of the system. 
Such constructions play a crucial role in allowing observer design for bias, calibration, and other general parameter estimation problems where the associated system model is uncontrollable but still observable. 

\section{Observer Architecture for Fundamental Systems}
\label{sec:observer_Design}

This section explores observer design techniques for fundamental systems.

\begin{figure}[htb]
    \centering
    \includegraphics[width=0.6\columnwidth]{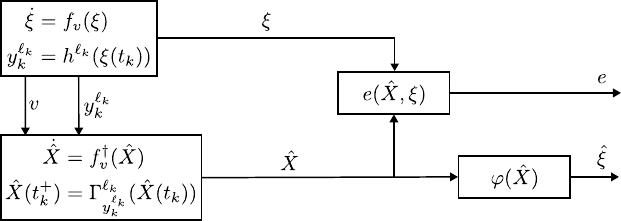}
    \caption{A \emph{Synchronous Observer Architecture}. Inputs $v$ and outputs $y^\ell$, $\ell = 1,..., n$, are used by the synchronous lift $f^\dag$ and correction functions $\Gamma^\ell$, respectively.
    The error between the observer state $\hat{X}$ and the system state $\xi$ is given by the error function $e(\hat{X}, \xi)$.
    The state estimate is obtained through the reconstruction function $\varphi(\hat{X})$.
    }
    \label{fig:observer_architecture}
\end{figure}

\subsection{Synchronous Observer Design for Fundamental Systems}

An observer is an algorithm for estimating the state of a system from its input signal and measurements.
An observer \emph{architecture} is a structural decomposition of the algorithm into components for ease of understanding and implementation. 
The proposed synchronous observer architecture (Figure~\ref{fig:observer_architecture}) is an adaptation for fundamental systems of the classical hybrid observer architecture. 

\begin{definition}\label{dfn:synchronous_architecture}
A \emph{synchronous observer architecture} (Figure~\ref{fig:observer_architecture}) consists of: 
\begin{enumerate}
\item  A fundamental system (Def.~\ref{dfn:fundamental_system}) $f : \vecL \to \gothX(\calM)$ 
\[
f_v = \phi^\sharp_{\Lambda(v)} 
\]
for a Lie group $\grpG$ and associated group action $\phi : \grpG \times \calM \to \calM$, and $\Lambda : \vecL \to \gothg$ affine, along with a collection of measurement functions $h^\ell : \calM \to \calN^\ell$, $\ell = 1,...,n$. 
\item A fundamental lift \eqref{eq:fundamental_lift}  $f^\dag : \vecL \to \gothX(\grpG)$,
\[
f^\dag_v (\hat{X}) = \tD \tL_{\hat{X}} \Lambda(v),
\]
of $f$ along with a collection of \emph{update functions} $\Gamma^\ell : \calN^\ell \to \Diff(\calG)$, 
\[
\Gamma_{y^\ell}^\ell : \grpG \to \grpG. 
\]
\item The synchronous error function $e(\hat{X}, \xi) := \phi(\hat{X}^{-1}, \xi)$ (\eqref{eq:synch_error} and Def.~\ref{dfn:error_function}). 
\item A fixed choice of origin $\mr{\xi} \in \calM$ and the reconstruction function \eqref{eq:reconstruction_map}
\[
\varphi(\hat{X}) = \phi_{\mr{\xi}} (\hat{X}) .
\]
\end{enumerate}
\hfill$\Box$\end{definition}

The \emph{state estimate} generated by a synchronous observer architecture is \begin{align}
    \hat{\xi} = \varphi(\hat{X}) =  \phi(\hat{X}, \mr{\xi}).
    \label{eq:state_reconstruction}
\end{align}
Importantly, the error equals the origin if and only if the state estimate equals the state,
\begin{align*}
    e(\hat{X}, \xi) = \phi(\hat{X}^{-1}, \xi)  = \mr{\xi} && \iff && 
    \hat{\xi} = \phi(\hat{X}, \mr{\xi}) = \xi.
\end{align*}
From synchrony one has that $\dot{e}(\hat{X}(t), \xi(t)) = 0$ in the absence of measurements.
If we define a function $\Lyap : \calM \to \R_+$ with $\Lyap (\mr{\xi}) = 0$ a minimum and appropriate growth properties around $\mr{\xi}$, then $\Lyap (e(\hat{X}, \xi))$ is an effective comparison function to analyse convergence of the observer. 
If there are no measurements then 
\[
\ddt \Lyap(e(\hat{X},\xi)) = \tD \Lyap (e) [\dot{e}] = 0,
\]
and the comparison function is stationary. 
In particular, the only point at which the error can change, and when the comparison function updates, is when the observer updates its state using an update function $\Gamma^\ell$. 
This is a hybrid process that happens instantaneously whenever a new measurement is available. 
The update functions are designed to adjust the observer state to drive the error $e$ to the origin $\mr{\xi}$ as measured by $\Lyap(e) \to 0$. 
That is, for a sequence of measurements $y_1^{\ell_1},y_2^{\ell_2},... \in \calN$ satisfying
\begin{align}
    y_k^{\ell_k} = h^{\ell_k}(\xi(t_k)),
\end{align}
for $t_1 < t_2 < \ldots \in \R$ and $\ell_1, \ell_2, \ldots \in \{1, \ldots, n\}$.
Then instantaneously 
\begin{align}
    \hat{X}(t_k^+) & := \Gamma^{\ell_k}_{h^{\ell_k}(\xi(t_k))} (\hat{X}(t_k)), \label{eq:correction_dynamics} \\
    \hat{\xi}_k^+ & := \varphi(\hat{X}(t_k^+) ) 
\end{align}
for every $k \in \N$. 
In the intervening intervals, the observer state $\hat{X}$ evolves according to the fundamental lift dynamics.

\begin{remark}
    In practice, most observers are implemented digitally and inputs are also received discretely.
    Typically, however, the inputs are received at a far higher rate than the measurements and the proposed architecture is usually appropriate even if the actual fundamental lift is implemented numerically. 
\end{remark}

\subsection{Constructive Observers through Lyapunov Design}
\label{sec:Lyapunov_design}

We propose two theorems that demonstrate some of the key advantages of synchronous observer design.

The first result describes the modularity of synchronous observer designs, extending the particular case previously shown in \cite{2025_vangoor_SynchronousObserverDesign}.
It states that, if a collection of update functions for a synchronous observer design can be shown to each individually decrease a cost function, then combining them will also decrease the cost function.
A particular case of this theorem was exploited in the inertial navigation system (INS) observer design proposed in \cite{2025_vangoor_SynchronousObserverDesign} to individually analyse update functions for each sensor considered, and then add them together to obtain an observer design that seamlessly combines all sensor information.
This property fails even in simple non-synchronous observer designs, such as linear Luenberger observers%
\footnote{Given matrices $A,C_1,L_1,C_2,L_2$ of compatible dimensions, it is possible that $A-L_1C_1$ and $A-L_2C_2$ are Hurwitz but $A-(L_1C_1 + L_2C_2)$ is not. Thus, simply adding correction terms $L_1 C_1$ and $L_2 C_2$ in a Luenberger observer may fail to yield a stable observer}.

\begin{theorem}\label{thm:combined_corrections}
Consider the synchronous observer architecture (Def.~\ref{dfn:synchronous_architecture}). 
Suppose $\Lyap : \calM \to \R^+$ is a cost function and that
\begin{align}\label{eq:individual_lyapunov_correction}
    \Lyap(e(\Gamma^\ell_{h^\ell(\xi)}(\hat{X}), \xi)) - \Lyap(e(\hat{X}, \xi)) \leq 0,
\end{align}
for all states $\xi \in \calM$ and observer states $\hat{X} \in \grpG$, and for each $\ell=1,...,n$.
If the updates \eqref{eq:correction_dynamics} are applied sequentially
then 
\begin{align}
    \ddt \Lyap(e(\hat{X}(t), \xi(t))) &= 0, & t &\in (t_k,t_{k+1}), \notag \\
    \Lyap(e(\hat{X}(t_k^+), \xi(t_k^+))) &\leq \Lyap(e(\hat{X}(t_k), \xi(t_k))), & k &\in \N.
    \label{eq:combined_lyapunov}
\end{align}
and the cost function $\Lyap$ is a Lyapunov-like function \cite[Def.~2.2]{1998_branicky_MultipleLyapunovFunctions} for the resulting error dynamics. 
\hfill$\Box$\end{theorem}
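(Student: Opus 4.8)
The plan is to separate the two assertions in \eqref{eq:combined_lyapunov} and dispatch them independently. The first equation, that $\ddt \Lyap(e(\hat{X}(t),\xi(t))) = 0$ on each open interval $(t_k, t_{k+1})$, is essentially immediate from the synchrony property: between measurement times the observer state $\hat{X}$ evolves under the fundamental lift $f^\dag$ and the system state $\xi$ evolves under $f$, and since $f^\dag$ is a synchronous lift of $f$ (Theorem~\ref{thm:synchrony_fundamental}, via the error $e(\hat{X},\xi) = \phi(\hat{X}^{-1},\xi)$), we have $\ddt e(\hat{X}(t),\xi(t)) = 0$ by \eqref{eq:ddt_e_synch_condition}. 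Then the chain rule gives $\ddt \Lyap(e(\hat{X}(t),\xi(t))) = \tD\Lyap(e)[\dot e] = 0$. I would state this in one or two sentences.

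The heart of the argument is the discrete inequality at the measurement times. At time $t_k$ a single update $\Gamma^{\ell_k}$ is applied, so $\hat{X}(t_k^+) = \Gamma^{\ell_k}_{h^{\ell_k}(\xi(t_k))}(\hat{X}(t_k))$ while $\xi(t_k^+) = \xi(t_k)$ (the true state does not jump). Writing $\hat{X} = \hat{X}(t_k)$ and $\xi = \xi(t_k)$, the claimed inequality $\Lyap(e(\hat{X}(t_k^+),\xi(t_k^+))) \le \Lyap(e(\hat{X}(t_k),\xi(t_k)))$ is exactly the hypothesis \eqref{eq:individual_lyapunov_correction} instantiated at this particular pair $(\hat{X},\xi)$, for the index $\ell = \ell_k$. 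So each individual jump does not increase $\Lyap(e)$. If the theorem statement intends that several updates may be applied at the same instant (``applied sequentially''), I would handle this by induction on the number of updates applied at that time: after the first update we are at a new observer state $\hat{X}'$ and the same true state $\xi$, and \eqref{eq:individual_lyapunov_correction} applies again at $(\hat{X}',\xi)$ because the hypothesis is quantified over \emph{all} $\hat{X} \in \grpG$; chaining the inequalities gives the result. This uniform quantification over observer states is the whole point, and is precisely what distinguishes synchronous designs from the Luenberger counterexample in the footnote.

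Finally I would assemble these into the Lyapunov-like conclusion. Combining the two displays: $\Lyap(e(\hat{X}(t),\xi(t)))$ is constant on each continuous interval and non-increasing across each jump, hence it is non-increasing along the entire hybrid trajectory; this is exactly the definition of a Lyapunov-like function in the sense of \cite[Def.~2.2]{1998_branicky_MultipleLyapunovFunctions}, which I would cite rather than reprove. One should also note that $\Lyap \ge 0$ by hypothesis and $\Lyap(\mr{\xi}) = 0$, so the conclusion indeed says something about convergence toward the error origin.

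The main obstacle, such as it is, is bookkeeping rather than mathematics: being careful about what jumps and what does not at $t_k$ (the true state $\xi$ is continuous, only $\hat{X}$ jumps), and making the ``applied sequentially'' phrase rigorous via the induction sketched above so that the single-update hypothesis \eqref{eq:individual_lyapunov_correction} suffices even when multiple measurements arrive simultaneously. There is no hard estimate to establish — the force of the theorem is entirely in the fact that synchrony decouples the error evolution from the system dynamics, so that a bound proved pointwise in $(\hat{X},\xi)$ for one correction survives composition.
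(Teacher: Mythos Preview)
Your proposal is correct and follows essentially the same approach as the paper's own proof: use synchrony to conclude $\dot{e}=0$ on the open intervals, invoke hypothesis \eqref{eq:individual_lyapunov_correction} directly at each jump, and combine these to get the Lyapunov-like property. If anything, your version is more careful than the paper's (which dispatches the whole thing in three sentences), since you explicitly handle the possibility of several updates at the same instant via induction over the uniform quantifier on $\hat{X}$.
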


\begin{proof}
Due to synchrony, the dynamics of $e$ between measurement times are constant, i.e. $\dot{e} = 0$ for all $t \in (t_k,t_{k+1})$.
On the other hand, when $t = t_k$, then \eqref{eq:individual_lyapunov_correction} guarantees that the cost function decreases.
Together, these facts mean that the cost function $\Lyap$ is always non-increasing and satisfies the conditions for a hybrid Lyapunov-like function \cite[Def.~2.2]{1998_branicky_MultipleLyapunovFunctions}. 
\end{proof}

The second result we present shows how an update function for a synchronous observer design can be designed from a simple differential correction condition. 
This is important in simplifying the analysis of update terms, since a differential condition is often more easily designed and verified than the full non-linear update term. 
The proposed discretisation method allows one to conduct the analysis of the observer design in continuous-time, but implement the resulting update function as an asynchronous update. 

\begin{theorem}\label{thm:discretised_corrections}
Consider the synchronous observer architecture (Def.~\ref{dfn:synchronous_architecture}).
Suppose $\Lyap : \calM \to \R^+$ is a cost function and $\delta^\ell : \calN^\ell \to \gothX(\grpG)$, $\ell = 1,...,n$, a parametrised family of complete vector fields such that
\begin{align*}
    \tD \Lyap(e(\hat{X}, \xi)) \tD_{\hat{X}} e(\hat{X}, \xi) [\delta^\ell_{h^\ell(\xi)}(\hat{X})] \leq 0,
\end{align*}
for all states $\xi \in \calM$ and observer states $\hat{X} \in \grpG$.
Define the update function $\Gamma^\ell : \calN^\ell \to \Diff(\grpG)$ to be the flow of $\delta^\ell$ for some fixed $\tau_k > 0$,
\begin{align*}
    \Gamma^\ell_{y^\ell}(\hat{X}) &:= \Flow_{\delta^\ell_{y_k^{\ell}}}^{\tau_k} (\hat{X}),
\end{align*}
for all $\hat{X} \in \grpG$ and $y^\ell \in \calN^\ell$.
Then applying \eqref{eq:correction_dynamics} one has 
\[
\Lyap(e(\hat{X}(t_k^+), \xi(t_k^+))) \leq \Lyap(e(\hat{X}(t_k), \xi(t_k))), \quad  k \in \N.
\]
\hfill$\Box$\end{theorem}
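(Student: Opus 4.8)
The plan is to reduce the discrete update inequality to a one-dimensional monotonicity statement along the flow of the correction vector field. First I would observe that the update \eqref{eq:correction_dynamics} is an instantaneous jump, so the true state is frozen during the update: $\xi(t_k^+) = \xi(t_k) =: \xi$, and consequently the realised measurement $y_k^{\ell_k} = h^{\ell_k}(\xi(t_k)) = h^{\ell_k}(\xi)$ is a fixed element of $\calN^{\ell_k}$. Writing $\ell := \ell_k$, $\hat{X}_0 := \hat{X}(t_k)$ and $\delta := \delta^\ell_{h^\ell(\xi)} \in \gothX(\grpG)$ for the resulting single, now fixed, vector field, the update sets $\hat{X}(t_k^+) = \Flow_\delta^{\tau_k}(\hat{X}_0)$, which is well defined since $\delta^\ell$ is assumed complete.

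Next I would introduce the scalar curve $c : [0,\tau_k] \to \R^+$,
\[
c(s) := \Lyap\bigl(e(\Flow_\delta^s(\hat{X}_0), \xi)\bigr),
\]
so that $c(0) = \Lyap(e(\hat{X}(t_k),\xi(t_k)))$ and $c(\tau_k) = \Lyap(e(\hat{X}(t_k^+),\xi(t_k^+)))$. Smoothness of $\Lyap$, of the error $e$, and of the flow map in $s$ ensure $c$ is $C^1$, so the claim $c(\tau_k) \leq c(0)$ follows once I show $c$ is non-increasing on $[0,\tau_k]$.

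Then I would differentiate $c$ using the chain rule together with the defining flow property \eqref{eq:vector_field_flow}, $\ddt \Flow_\delta^s(\hat{X}_0) = \delta(\Flow_\delta^s(\hat{X}_0))$. Setting $\hat{X}_s := \Flow_\delta^s(\hat{X}_0)$ this yields
\[
\dds c(s) = \tD \Lyap(e(\hat{X}_s,\xi))\, \tD_{\hat{X}} e(\hat{X}_s,\xi)\bigl[\delta^\ell_{h^\ell(\xi)}(\hat{X}_s)\bigr].
\]
The right-hand side is exactly the expression in the hypothesis of the theorem, evaluated at the pair $(\hat{X}_s, \xi) \in \grpG \times \calM$; since the hypothesis is assumed for \emph{all} $\hat{X} \in \grpG$ and $\xi \in \calM$, it gives $\dds c(s) \leq 0$ for every $s \in [0,\tau_k]$. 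Hence $c(\tau_k) \leq c(0)$, which is the assertion.

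I do not expect a genuine obstacle; the argument is a short computation. The one point needing care is the bookkeeping of which arguments are held fixed: because the update is an instantaneous jump, both the true state $\xi(t_k)$ and the realised measurement $h^\ell(\xi(t_k))$ remain constant throughout the flow in the $\hat{X}$ variable, so the differential hypothesis — stated uniformly over $(\hat{X},\xi)$ — may be applied at each intermediate point $\hat{X}_s$ with the \emph{same} $\xi$. Completeness of $\delta^\ell$ is used only to guarantee that $\Flow_\delta^{\tau_k}$, and hence the curve $c$, is defined on the whole interval $[0,\tau_k]$.
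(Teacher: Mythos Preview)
Your argument is correct and matches the paper's own proof essentially step for step: define $\hat{X}_s = \Flow^s_{\delta^{\ell_k}_{h^{\ell_k}(\xi(t_k))}}(\hat{X}(t_k))$, differentiate $\Lyap(e(\hat{X}_s,\xi(t_k)))$ in $s$, apply the differential hypothesis at each $\hat{X}_s$, and conclude monotonicity. Your treatment is in fact slightly more careful than the paper's in making explicit that $\xi(t_k^+)=\xi(t_k)$ and that completeness is what guarantees the flow exists on all of $[0,\tau_k]$.
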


\begin{proof}
At the time $t_k$ of the measurement $y_k^{\ell_k} = h^{\ell_k}(\xi(t_k))$, let $\hat{X}_s = \Flow^s_{\delta^{\ell_k}_{h^{\ell_k}(\xi(t))}}(\hat{X}(t))$. Then,
\begin{align*}
    \frac{\td}{\td s} \Lyap(e(\hat{X}_s, \xi(t)))
    &= \tD \Lyap(e(\hat{X}_s, \xi(t_k))) \tD_{\hat{X}_s} e(\hat{X}_s, \xi(t_k)) [\delta^{\ell_k}_{h^{\ell_k}(\xi_{t_k})}(\hat{X}_s)]
    \leq 0,
\end{align*}
for all $s \geq 0$.
Therefore
\begin{align*}
    \Lyap(e(\Gamma^{\ell_k}_{h^{\ell_k}(\xi(t_k))}(\hat{X}(t_k)), \xi(t_k)))
    &= \Lyap(e(\hat{X}_{\tau_k}, \xi(t_k))) \\
    &\leq \Lyap(e(\hat{X}(t_k), \xi(t_k))),
\end{align*}
for any choice of $\tau_k > 0$, as required.
\end{proof}

\section{Examples}
\label{sec:examples}

The following three examples demonstrate the various concepts described in this paper.
The first provides a classic example of a fundamental system and shows how this can be used to construct a synchronous model.
The second shows how the theory developed in Section \ref{sec:symmetry_construction} can be used to identify and construct the Lie group associated with a system when one exists.
The third demonstrates an application of Theorems \ref{thm:combined_corrections} and \ref{thm:discretised_corrections} to extend an existing observer for velocity-aided attitude (VAA) \cite{2023_vangoor_ConstructiveEquivariantObserver}.

The examples make use of the following Lie groups and their Lie algebras.
The special orthogonal group:
\begin{align}\label{eq:SO3_dfn}
    \SO(3) &:= \cset{R \in \R^{3\times 3}}{R^\top R = I_3, \; \det(R) = 1}, \notag \\
    \so(3) &:= \cset{\Omega^\times \in \R^{3\times 3}}{{\Omega^\times}^\top + \Omega^\times = 0_{3\times 3}}.
\end{align}
The special Euclidean group:
\begin{align}\label{eq:SE2_dfn}
    \SE(2) &:= \cset{
        \begin{pmatrix}
            R(\theta) & x \\
            0_{1\times 2} & 1
        \end{pmatrix} \in \R^{3\times 3}
    }{\theta \in [0,2\pi), \; x \in \R^2}, \notag \\
    R(\theta) &:= \begin{pmatrix}
        \cos(\theta) & -\sin(\theta) \\ \sin(\theta) & \cos(\theta)
    \end{pmatrix}, \notag\\
    \se(2) &:= \cset{
        \begin{pmatrix}
            0 & -\omega & u_1 \\ 
            \omega & 0 & u_2 \\ 
            0 & 0 & 0
        \end{pmatrix}
    }{\omega,u_1,u_2 \in \R}.
\end{align}
\subsection{Rotating Bearings}
\label{ex:rotation_bearings}

One particularly well studied example of observer design for a fundamental system is that of bearing estimation on the sphere $\Sph^2$ via an observer design on the group of rotations $\SO(3)$.

Consider a vehicle equipped with a gyroscope providing measurements $\Omega \in \R^3$ of the angular velocity in the body fixed frame.
A vector $\eta$, representing the direction of gravity expressed in the vehicle's body fixed frame, evolves according to
\begin{align}\label{eq:example_sphere_system}
    \dot{\eta} = f_\Omega(\eta) := - \Omega^\times \eta,
\end{align}
where $\Omega^\times \in \so(3)$ is the unique skew symmetric matrix such that $\Omega^\times v = \Omega \times v$ for all vectors $v\in \R^3$.

This system is fundamental.
Consider the transitive right group action $\phi : \SO(3) \times \Sph^2 \to \Sph^2$, given by
\begin{align*}
    \phi(R, \eta) := R^\top \eta,
\end{align*}
and define the map $\Lambda : \R^3 \to \so(3)$ by
\begin{align*}
    \Lambda(\Omega) := \Omega^\times.
\end{align*}
Then
\begin{align*}
    \phi^\sharp_{\Lambda(\Omega)}(\eta)
    &= \tD \phi_\eta(\Id)[\Lambda(\Omega)] \\
    &= \ddso \phi(\exp(s\Lambda(\Omega)), \eta) \\
    &= \ddso \exp(s\Omega^\times)^\top \eta \\
    &= (\Omega^\times)^\top \eta \\
    &= -\Omega^\times \eta \\
    &= f_\Omega(\eta),
\end{align*}
for all $\eta \in \Sph^2$ and $\Omega \in \R^3$.

To obtain a synchronous model, we simply follow the construction proposed in Theorem \ref{thm:group_action_system}.
Define the system on the Lie group $\SO(3)$,
\begin{align*}
    f^\dag : \R^3 \to \gothX(\SO(3)), &&
    f^\dag_\Omega(\hat{R}) := \hat{R} \Lambda(\Omega) = \hat{R} \Omega^\times,
\end{align*}
and define the error function $e(\hat{R}, \eta) := \phi(\hat{R}^\top, \eta) = \hat{R} \eta$.
Then, for the system state $\eta$ and observer state $\hat{R}$, synchrony \eqref{eq:ddt_e_synch_condition} is verified by computing
\begin{align*}
    \ddt e(\hat{R}, \eta)
    &= \ddt \phi(\hat{R}^\top, \eta) \\
    &= \ddt \hat{R}\eta \\
    &= f^\dag_\Omega (\hat{R})\eta + \hat{R}f_\Omega (\eta) \\
    &= (\hat{R} \Omega^\times) \eta + \hat{R} (-\Omega^\times \eta) \\
    &= 0.
\end{align*}
This shows that, indeed, $e$ and $f^\dag$ provide a synchronous observer architecture for the system \eqref{eq:example_sphere_system}.

\subsection{Deriving the Symmetry of a Unicycle}
\label{ex:unicycle}

In this example, we demonstrate how the theory presented in Section \ref{sec:symmetry_construction} may be used to discover a symmetry for a system.
Crucially, we make no use of any prior knowledge of a symmetry of the system in question.

Define the kinematic unicycle as a system $f : \R^2 \to \gothX(\calM)$ on the manifold $\calM = \Sph^1 \times \R^2$,
\begin{align*}
    f_{(\omega, v)}(\theta, x, y) = (\omega, v \cos(\theta), v \sin(\theta)).
\end{align*}
The system is easily seen to be controllable, and this means Theorem \ref{thm:synchrony_fundamental} can be applied.
To check that the Lie algebra generated by the system is finite dimensional, we select a basis for $\image f \subset \gothX(\calM)$ by
\begin{align*}
    f^1(\theta, x, y) &= (1, 0, 0), \\
    f^2(\theta, x, y) &= (0, \cos(\theta), \sin(\theta)).
\end{align*}
Now the Lie algebra generated by the system can be computed recursively (see the proof of Lemma \ref{lem:synchronous_vector_fields}).
The Lie bracket of a vector field with itself is always zero, so only the Lie bracket between $f^1$ and $f^2$ needs to be computed at this stage.
One has
\begin{align*}
    &[f^1, f^2](\theta, x, y)  \\
    &= \tD f^2 (\theta, x, y) [f^1(\theta, x, y)] - \tD f^1 (\theta, x, y) [f^2(\theta, x, y)] \\
    &= (0, - \sin(\theta), \cos(\theta)) - (0, 0, 0) \\
    &= (0, - \sin(\theta), \cos(\theta)).
\end{align*}
This new vector field is linearly independent of $f^1$ and $f^2$ and hence requires a new label,
\begin{align*}
    f^3(\theta, x, y) := (0, - \sin(\theta), \cos(\theta)).
\end{align*}
At the next stage, there are two new Lie brackets that need to be computed.
The bracket between $f^1$ and $f^3$ is given by
\begin{align*}
    &[f^1, f^3](\theta, x, y)  \\
    &= \tD f^3 (\theta, x, y) [f^1(\theta, x, y)] - \tD f^1 (\theta, x, y) [f^3(\theta, x, y)] \\
    &= (0, - \cos(\theta), -\sin(\theta)) - (0,0,0) \\
    &= (0, - \cos(\theta), -\sin(\theta)) \\
    &= - f^2.
\end{align*}
The bracket between $f^2$ and $f^3$ is given by
\begin{align*}
    &[f^2, f^3](\theta, x, y)  \\
    &= \tD f^3 (\theta, x, y) [f^2(\theta, x, y)] - \tD f^2 (\theta, x, y) [f^3(\theta, x, y)] \\
    &= (0,0,0) - (0,0,0) \\
    &= (0,0,0).
\end{align*}
This shows that the Lie algebra generated by the system $f$ is finite-dimensional, since all the vector fields generated are now expressed as linear combinations of other vector fields that were already included in the basis.


The Lie algebra $\gothL = \mathrm{span}\{f^1, f^2, f^3\}$, with 
\[
[f^1, f^2] = f^3, \quad [f^1, f^3] = - f^2, \quad [f^2,f^3] = 0,
\]
is isomorphic (as a Lie algebra) to the special Euclidean Lie algebra $\se(2)$ by
\[
f^1 \triangleq 
\begin{pmatrix} 
0 & -1 & 0 \\ 1 & 0 & 0 \\ 0 & 0 & 0 
\end{pmatrix}
\;
f^2 \triangleq 
\begin{pmatrix} 
0 & 0 & 1 \\ 0 & 0 & 0 \\ 0 & 0 & 0 
\end{pmatrix}
\; 
f^3 \triangleq 
\begin{pmatrix} 
0 & 0 & 0 \\ 0 & 0 & 1 \\ 0 & 0 & 0 
\end{pmatrix}
\]

From this point, the corresponding Lie group may be generated either abstractly, by studying the simply connected group generated by $\se(2)$, or directly, by generating elements of the group by the flows of the vector fields included in the Lie algebra.

The simply connected Lie group generated by $\se(2)$ is, notably, not the familiar Lie group $\SE(2)$.
Instead, it is the universal covering group $\widetilde{\SE}(2)$, whose underlying manifold structure is equivalent to $\R^3$, and whose group product is given by
\begin{align*}
    \begin{pmatrix} t_1 \\ a_1 \\ b_1 \end{pmatrix}
    \cdot \begin{pmatrix} t_2 \\ a_2 \\ b_2 \end{pmatrix}
    &= \begin{pmatrix} 
        t_1 + t_2 \\
        a_1 + \cos(t_1) a_2 - \sin(t_1) b_2 \\
        b_1 + \sin(t_1) a_2 + \cos(t_1) b_2
     \end{pmatrix},
\end{align*}
for all $(t_1,a_1,b_1), (t_2,a_2,b_2) \in \widetilde{\SE}(2)$.
Effectively, the covering group $\widetilde{\SE}(2)$ is the same as $\SE(2)$ except that the `angle' $t$ takes values in $\R$ and can `wind up', in contrast to the angle in $\SE(2)$ which only takes values in $[0, 2\pi)$.

The covering group $\widetilde{\SE}(2)$ acts (right-handedly) on the unicycle state space by integrating the vector fields $f^1,f^2,f^3 \in \gothX(\calM)$ of its Lie algebra,
\begin{align*}
    \phi^s((t,a,b), (\theta, x, y)) :=
    \begin{pmatrix}
        \pi_{\Sph^1}(\theta + t) \\ x + \cos(\theta) a - \sin(\theta) b \\ y + \sin(\theta) a + \cos(\theta) b
    \end{pmatrix}.
\end{align*}
The kernel of $\phi^s$, understood as a group homomorphism $\phi^s : \widetilde{\SE}(2) \to \Diff(\calM)$, is exactly 
\begin{align*}
    \grpK &= \left\{(t,a,b) \in \widetilde{\SE}(2) \;\middle|\; \forall (\theta, x, y) \in \calM,
    \phi^s((t,a,b), (\theta, x, y)) = (\theta, x, y) \right\} \\
    &= \cset{(2n\pi, 0,0) \in \widetilde{\SE}(2)}{n \in \mathbb{Z}}.
\end{align*}
This is isomorphic to the group of integers under addition and is easily verified to be a discrete normal subgroup of $\widetilde{\SE}(2)$.
Finally, the quotient group $\widetilde{\SE}(2) / \grpK$ is seen to be exactly $\SE(2)$, i.e. the familiar symmetry group of the unicycle.
As described in Lemma \ref{lem:faithful_group_action}, $\SE(2)$ is thus the unique Lie group with a faithful action generated by the Lie algebra of the system.

\subsection{Velocity-Aided Attitude}
\label{ex:VAA}

The velocity-aided attitude (VAA) problem we consider here is that of estimating the velocity and orientation of a vehicle provided measurements from an inertial measurement unit (IMU), a GNSS velocity sensor, and a magnetometer.
We refer the reader to \cite{2023_vangoor_ConstructiveEquivariantObserver} for a detailed discussion.
In practical VAA, IMU readings typically arrive at a high rate (at least 500~Hz and often up to 2~kHz) while GNSS and magnetometer readings arrive at far lower rates (around 1~Hz and 5~Hz, respectively).
The IMU can be reasonably treated as a continuous-time signal while the GNSS and magnetometer should be treated as asynchronous measurements, justifying the approach presented in Section \ref{sec:observer_Design}.
In the following example, we will derive the correction terms in continuous-time, and then apply them as asynchronous updates according to Theorems \ref{thm:combined_corrections} and \ref{thm:discretised_corrections}.

The system state consists of the orientation and velocity $(R,v) \in \SO(3) \times \R^3$ of the vehicle, defined with respect to a given inertial frame.
Their dynamics satisfy
\begin{align}\label{eq:vaa_system}
    \ddt (R,v) = f_{(\Omega, a)}(R,v) := (R \Omega^\times, R a + g),
\end{align}
where $\Omega, a, g \in \R^3$ are the measured angular velocity, the measured specific acceleration, and the known gravity vector, respectively.

Define the Lie group $\grpG = (\grpR^3 \times \SO(3)) \ltimes \grpR^3$ with product and inverse given by
\begin{align*}
    (z_1, Q_1, x_1) \cdot (z_2, Q_2, x_2) &:= 
    (z_1 + z_2,\; Q_1 Q_2,\; x_1 + Q_1 x_2 + (I - Q_1) z_2 ), \\
    (z_1, Q_1, x_1)^{-1} &:= (- z_1,\; Q_1^\top,\; - Q_1^\top x_1 - (I - Q_1^\top) z_1).
\end{align*}
This acts on the state space $\calM := \SO(3) \times \R^3$ by
\begin{align*}
    &\phi : \grpG \times \calM \to \calM, \\
    &\phi((z, Q, x), (R,v)) := (R Q,\; v + R x + (I - R) z ).
\end{align*}
Define the lift $\Lambda : \R^3 \times \R^3 \to \gothg$ to be
\[
    \Lambda(\Omega, a) := (g, \Omega^\times, a - g),
\]
and observe that
\begin{align*}
    \phi^\sharp_{\Lambda(\Omega, a)}(R,v) 
    &= \tD \phi_{(R,v)}(I)[\Lambda(\Omega, a)] \\
    &= \tD \phi_{(R,v)}(I)[g, \Omega^\times, a - g] \\
    &= (R \Omega^\times,\; R (a - g) + (I - R) g ) \\
    &= (R \Omega^\times,\; R a + g ) \\
    &= f_{(\Omega, a)}(R,v).
\end{align*}
Thus the system dynamics \eqref{eq:vaa_system} are fundamental under $\phi$.

Denote the state as $\xi = (R,v) \in \calM$ and define the observer state $\hat{X} = (\hat{z}, \hat{Q}, \hat{x}) \in \grpG$ with observer dynamics
\begin{align*}
    \dot{\hat{X}} &= \hat{X} \Lambda(\Omega, a); \\
    \ddt (\hat{z}, \hat{Q}, \hat{x}) 
    &= \ddso (\hat{z}, \hat{Q}, \hat{x}) \cdot (s g, \exp(s \Omega^\times), s(a-g)) \\
    &= \ddso (\hat{z} + s g,\; \hat{Q} \exp(s \Omega^\times),\; 
    \\ &\hspace{3cm}
    \hat{x} + s\hat{Q} (a - g)  + s (I - \hat{Q}) g ) \\
    &=  (g,\; \hat{Q} \Omega^\times,\; \hat{Q} (a - g)  + (I - \hat{Q}) g ) \\
    &=  (g,\; \hat{Q} \Omega^\times,\; \hat{Q} a + g ).
\end{align*}
Define the origin $\mr{\xi} = (I_3, 0) \in \calM$. Then the state estimate is defined by
\begin{align*}
    \hat{\xi} = (\hat{R}, \hat{v}) = \phi(\hat{X}, \mr{\xi})
    = (\hat{Q}, \hat{x}).
\end{align*}
The observer error is given by
\begin{align*}
    e &:= e(\hat{X}, \xi) \\
    &= \phi(\hat{X}^{-1}, \xi) \\
    &= \phi((- \hat{z},\; \hat{Q}^\top,\; - \hat{Q}^\top \hat{x} - (I - \hat{Q}^\top) \hat{z}),\; (R,v)) \\
    &= (R \hat{Q}^\top, \; v + R (- \hat{Q}^\top \hat{x} - (I - \hat{Q}^\top) \hat{z}) + (I-R)(-\hat{z})) \\
    &= (R \hat{Q}^\top, \; v - R \hat{Q}^\top \hat{x} - (I - R\hat{Q}^\top) \hat{z} ).
\end{align*}
It is straightforward to verify that the observer architecture is synchronous by computing
\begin{align*}
    \dot{e}
    &= \ddt (R \hat{Q}^\top, \; v - R \hat{Q}^\top \hat{x} - (I - R\hat{Q}^\top) \hat{z}) \\
    &= (R\Omega^\times \hat{Q}^\top - R\Omega^\times \hat{Q}^\top, \; 
    \\ &\hspace{1.5cm}
    (Ra + g)
    - R \hat{Q}^\top (\hat{Q} a + g)
    - (I - R\hat{Q}^\top) g) \\
    &= (0, \; 
    Ra + g - R a - R \hat{Q}^\top g + R\hat{Q}^\top g - g) \\
    &= (0, \; 0).
\end{align*}

\begin{figure*}[!t]
    \includegraphics[width=\linewidth]{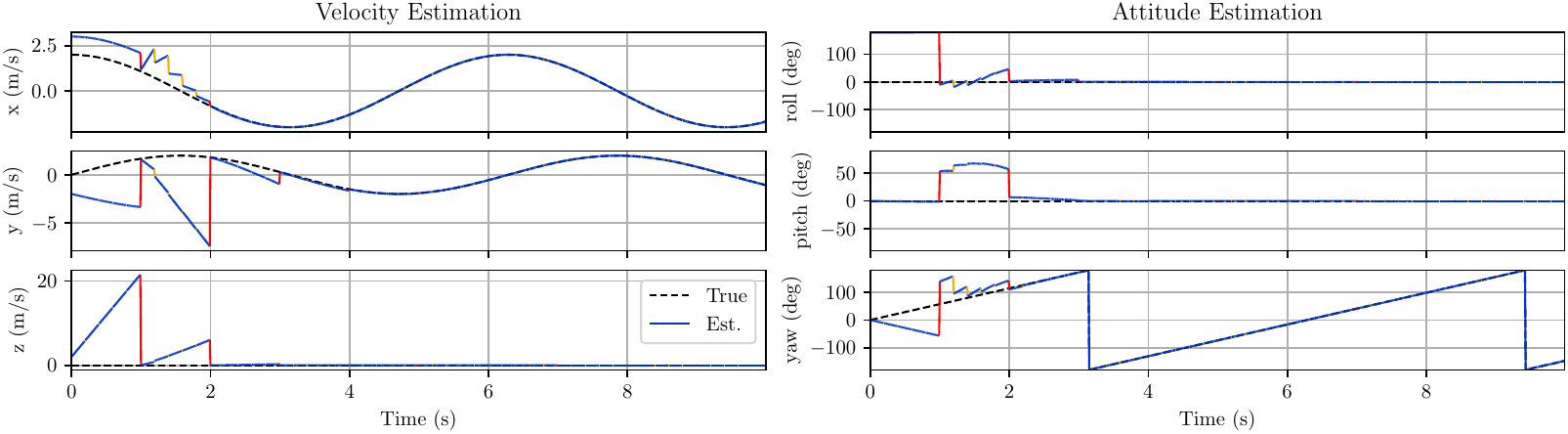}
    \caption{Estimation performance of the velocity-aided attitude observer combining discretised update terms for magnetometer and GNSS velocity measurements.
    The times at which GNSS and magnetometer updates are applied are highlighted in red and orange, respectively.}
    \label{fig:vaa_estimation}
\end{figure*}

\begin{figure}[!t]
    \centering
    \includegraphics[width=0.5\linewidth]{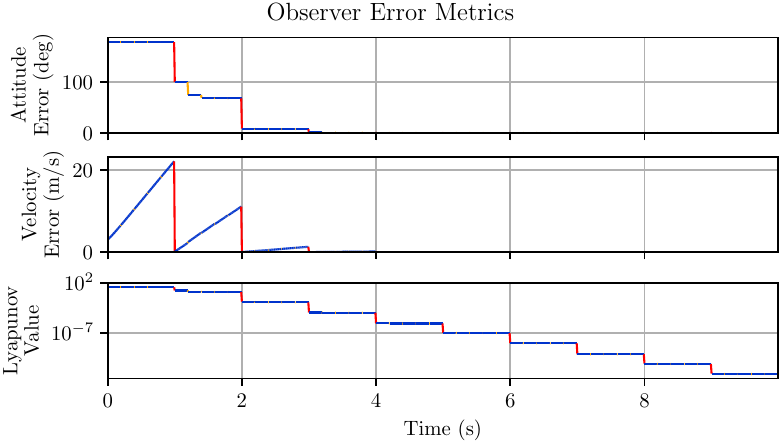}
    \caption{Estimation error and Lyapunov value evolution for the velocity-aided attitude observer combining discretised update terms for magnetometer and GNSS velocity measurements.
    The times at which GNSS and magnetometer updates are applied are highlighted in red and orange, respectively.}
    \label{fig:vaa_errors}
\end{figure}

Define $(R_e, v_e) := e(\hat{X}, \xi)$ and define a candidate Lyapunov function $\Lyap : \calM \to \R$ by
\begin{align}\label{eq:vaa_lyapunov}
    \Lyap(e) := \frac{1}{2} \vert R_e - I_3 \vert^2 + \frac{\alpha}{2} \vert v_e \vert^2,
\end{align}
where $\alpha > 0$ is a constant.
The measurements from the GNSS and magnetometer are given by
\begin{align*}
    y^1 &= h^1(R,v) = v, \\
    y^2 &= h^2(R,v) = R^\top \mr{m},
\end{align*}
respectively, where the $\mr{m} \in \R^3$ is the reference magnetic field direction in the inertial frame (usually $\eb_1 = (1,0,0)$).
We seek to design continuous-time correction terms $\delta^1, \delta^2$ that differentially decrease the candidate Lyapunov function.
We can then discretise these corrections using Theorem \ref{thm:discretised_corrections} to obtain updates and combine them using Theorem \ref{thm:combined_corrections}.

We define the continuous-time correction terms using their right-invariant trivialisations; that is, we write
\begin{align*}
    \delta^\ell_{y^\ell}(\hat{X}) = \Delta^\ell \cdot \hat{X},
\end{align*}
where $\Delta^\ell := \Delta^\ell(\hat{X}, y^\ell) \in \gothg$ depends on the observer state and the measurement.
Under such a correction, the differential change in the error is
\begin{align*}
    \tD_{\hat{X}} e(\hat{X}, \xi) [\delta^\ell_{h^\ell(\xi)}(\hat{X})]
    &= \tD_{\hat{X}} \phi(\hat{X}^{-1}, \xi) [\Delta^\ell \cdot \hat{X}] \\
    &= \tD \phi_{\phi(\hat{X}^{-1}, \xi)}(I) [\Delta^\ell] \\
    &= - \tD \phi_{(R_e, v_e)}(I) [w_{\Delta^\ell}, \Omega_{\Delta^\ell}^\times, u_{\Delta^\ell}] \\
    &= - (R \Omega_{\Delta^\ell}^\times,\; R u_{\Delta^\ell} + (I - R) w_{\Delta^\ell} ).
\end{align*}
Hence, the differential change in the cost function is 
\begin{align}\label{eq:vaa_cost_dynamics}
    \tD \Lyap(e(\hat{X}, \xi)) \tD_{\hat{X}} e(\hat{X}, \xi) [\delta^\ell_{h^\ell(\xi)}(\hat{X})]
    = \langle R_e - I_3,\; R \Omega_{\Delta^\ell}^\times \rangle 
    + \alpha \langle v_e,\; R u_{\Delta^\ell} + (I - R) w_{\Delta^\ell} \rangle.
\end{align}
From \cite{2023_vangoor_ConstructiveEquivariantObserver}, the GNSS velocity correction term
\begin{align}
    &\Delta^1(\hat{X}, y^1) := 
    (
        k_v (y^1-\hat{z}),\;
        k_c ((\hat{x} - \hat{z})^\times (y^1 - \hat{z}))^\times,\;
        k_v (y^1 - \hat{x})
    )\notag
\end{align}
ensures that \eqref{eq:vaa_cost_dynamics} is less than or equal to zero for any constant gains $k_c > 0$ and $ k_v > \frac{k_c}{2 \alpha} > 0$.
From \cite{2012_trumpf_AnalysisNonLinearAttitude}, the magnetometer correction term
\begin{align*}
    \Delta^2(\hat{X}, y^2) :=
    (
        0,\; k_m ( (\hat{Q} y^2) \times \mr{m})^\times,\; 0
    )
\end{align*}
also ensures that \eqref{eq:vaa_cost_dynamics} is less than or equal to zero for any gain $k_m > 0$.

The discretised update terms $\Gamma^\ell_{y_\ell}(\hat{X})$ are applied by solving the following differential equation.
At a time $t_k$ consider the measurement $y^{\ell_k}_k = h^{\ell_k}(\xi(t_k))$.
Then define $\hat{X}^+_{t_k}(\tau)$ as the solution to 
\begin{align*} 
\ddtau \hat{X}^+_{t_k}(\tau) & = \delta^{\ell_k}_{y^{\ell_k}_k}(\hat{X}^+_{t_k}(\tau))
= \Delta^{\ell_k}(\hat{X}^+_{t_k}(\tau), y^{\ell_k}_k) \cdot \hat{X}^+_{t_k}(\tau), \\
\hat{X}^+_{t_k}(0) &= \hat{X}(t_k).
\end{align*}
Then the updated observer state $\hat{X}(t_k^+)$ is obtained by
\begin{align*}
    \hat{X}(t_k^+) 
    &=\Gamma^{\ell_k}_{y^{\ell_k}_k}(\hat{X}(t_k)) 
    := \Flow_{\delta^{\ell_k}_{y^{\ell_k}_k}}^{\tau_k}
    = \hat{X}^+_{t_k}(\tau_k),
\end{align*}
where $\tau_k$ is the update step length at index $k$.

Discretising the corrections can be challenging to do explicitly as it amounts to solving a parametrised nonlinear ODE.
However, the flows of each $\delta^\ell_{y^\ell}$ can also be implemented numerically using any of a variety of numerical integration schemes compatible with manifolds (and particularly with Lie groups), for example \cite{1998_munthe-kaas_RungeKuttaMethodsLie}.
For sufficiently small $\tau_k$, the decrease condition \eqref{eq:individual_lyapunov_correction} can be guaranteed.
Combining the resulting asynchronous updates $\Gamma^1_{y^1}$ and $\Gamma^2_{y^2}$ is as simple as applying them sequentially upon receiving measurements.

Outside of the context of a synchronous observer, it is not obvious how to guarantee the Lyapunov decrease when discretising or combining these correction terms.
However, thanks to Theorems \ref{thm:discretised_corrections} and \ref{thm:combined_corrections}, we know that they may be discretised as flows of vector fields and combined through composition.

To demonstrate the final observer design, we provide simulation\footnote{Code at: \url{https://github.com/pvangoor/synchronous_VAA}} results in Figures \ref{fig:vaa_estimation} and \ref{fig:vaa_errors}.
We considered a vehicle flying in a circle with a forward velocity of 2~m/s and an angular velocity of 1~rad/s.
Precisely, we defined the initial condition of the system to be $(R,v) = (I_3, 2\eb_1)$ and defined the inputs to be $(\Omega, a) = (\eb_3, 2\eb_2 - g)$, with $g = 9.81 \eb_3$.
The observer was initialised with a large initial error, $\hat{X}_0 = (0,\;\exp(0.99\pi\eb_1^\times),\;(3, -2, 2))$.
The results were obtained by integrating the dynamics of the system and observer at 100~Hz using Lie group Euler integration for 10~s.
The GNSS measurements were provided at 1~Hz and the magnetometer measurements were provided at 5~Hz, and the observer gains were set to $k_v = 5.0, k_c = 1.0, k_m = 5.0$.
The flows of $\delta^1_{y^1}$ and $\delta^2_{y^2}$ were obtained by using Lie group Euler integration with 50 integration steps and $\tau_k = 1,0.2$ for GNSS and magnetometer measurements, respectively.

Figure \ref{fig:vaa_estimation} shows the estimates provided by the observer as compared to the true state, and Figure \ref{fig:vaa_errors} shows the estimation errors and Lyapunov value \eqref{eq:vaa_lyapunov} (with $\alpha = 1$) over time.
At the times when measurements are received, the asynchronous updates improve the estimates of attitude and velocity.
The Lyapunov value is constant in between measurements being received, and is instantaneously reduced every time a measurement is available.

\section{Conclusion}

This paper presented a theory of synchronous models and its relationship to fundamental systems and observer design.
First, we defined what it means for one system to contain a synchronous model of another with respect to an error function as an extension of the concept of an internal model.
Second, we showed that a system admits a synchronous lift if and only if has a complete and finite-dimensional Lie algebra.
For controllable systems, we additionally showed that such a system is a fundamental system, meaning it is induced by the fundamental vector fields of a transitive Lie group action, for which we provide a construction directly from the system's Lie algebra.
Third and finally, we showed how the structure of a fundamental systems can be exploited in observer design to discretise and combine update terms while guaranteeing decrease of a cost function.
Synchronous observer designs have already seen success in addressing practical problems such as attitude estimation \cite{2008_mahony_NonlinearComplementaryFilters,2012_trumpf_AnalysisNonLinearAttitude}, velocity-aided attitude \cite{2023_vangoor_ConstructiveEquivariantObserver}, inertial navigation systems \cite{2025_vangoor_SynchronousObserverDesign}, and homography estimation \cite{2011_hamel_HomographyEstimationSpecial}, and this paper provides a framework for future theoretical development and practical applications.


\bibliographystyle{abbrv}
\bibliography{2025_Synchrony}

\end{document}